\documentclass{ifacconf}

\usepackage{graphicx}      
\usepackage{natbib}        
\usepackage{algorithmic}
\usepackage{graphicx}
\usepackage{textcomp}
\usepackage{xcolor}
\usepackage{amssymb}
\usepackage[centertags]{amsmath}
\usepackage{relsize}
\usepackage{textfit}
\usepackage{dcolumn}
\usepackage{url}
\usepackage{tikz,pgfplots}
\usetikzlibrary{arrows,decorations.pathreplacing,intersections,calc}
\usepackage{slashbox}
\usepackage{comment}
\usepackage{siunitx}
 \usepackage{url}

\usepackage[width=0.9\textwidth,labelfont=bf]{caption}
    \usepackage{graphicx}      
\usepackage{balance}        

\usepackage{algorithmic}
\usepackage{graphicx}
\usepackage{textcomp}
\usepackage{xcolor}
\usepackage{amssymb}
\usepackage[centertags]{amsmath}
\usepackage{relsize}
\usepackage{textfit}
\usepackage{dcolumn}
\usepackage{url}
\usepackage{tikz,pgfplots}
\usetikzlibrary{arrows,decorations.pathreplacing,intersections,calc}
\usepackage{slashbox}
\usepackage{comment} 
\usepackage{siunitx}
 \usepackage{url}
\usepackage{multirow}

\usepackage{tabularx}

\usepackage{exscale}

\usepackage{subfigure}
\usepackage{arydshln}

\usepackage{upgreek}

\usepackage{bm}

\usepackage{times,mathptmx}
\usepackage[T1]{fontenc}

\usepackage{cancel}

\newcommand{\norm}[2]{\left \lVert #1 \right \rVert_{#2}}

\usepackage{exscale}

\usepackage{subfigure}
\usepackage{arydshln}

\DeclareMathAlphabet{\mathcal}{OMS}{cmsy}{m}{n}

\newtheorem{definition}[thm]{Definition}
\newtheorem{infdef}[thm]{Informal Definition}

\usepackage{upgreek}

\usepackage{bm}

\usepackage{times,mathptmx}
\usepackage[T1]{fontenc}

\usepackage{cancel}

\begin{document}
\begin{frontmatter}

\title{Online Learning-Based Control with Guaranteed Error Bounds for a Class of Nonlinear Systems}

\author[First]{Ricus Husmann} 
\author[First]{Sven Weishaupt} 
\author[Second]{Malin Lotta Husmann} 
\author[First]{Harald Aschemann}

\address[First]{Chair of Mechatronics, University of Rostock, 
    Germany\\ \{Ricus.Husmann, Sven.Weishaupt, Harald.Aschemann\}@uni-rostock.de}
\address[Second]{Faculty of Physics, Dresden University of Technology, 
    Germany\\ malin\_lotta.husmann@mailbox.tu-dresden.de}

\begin{abstract}                
In this paper, we present a learning-based control for a class of nonlinear systems that guarantees exponential stability as well as bounded output errors. The control is based on the Gaussian Process Submodel Online Learning (GPSOL) algorithm and the Disturbance Error Rate Limiting (DERL) algorithm, both of which were developed in previous work. The GPSOL algorithm provides a method to learn Gaussian Process (GP) models for subsystems online, whereas the DERL filter allows to limit the rate of the prediction error of these GP models. The focus of this paper is the utilization of the GP model within an adaptive controller and the derivation of corresponding stability conditions and system peak-to-peak gains by means of linear matrix inequalities (LMIs). These peak-to-peak gains are then used to prescribe a desired prediction error rate for the DERL filter to achieve user-defined output error bounds. The gains and the related bounds were successfully verified using a simulation model. Furthermore, results form a successful experimental validation of the bounds and the overall control structure on a pneumatic test rig are presented.  While the control scheme and error bounds proposed in this paper are limited to first-order single-input-single-output systems, an extension to certain classes of higher-order and  multiple-input-multiple-output  systems is expected to be forthcoming.
\end{abstract}

 \begin{keyword}
 Adaptive control design, Output regulation and tracking, Robust learning systems.
 \end{keyword}

\end{frontmatter}

\section{Introduction}
Typically, in the modeling of technical systems certain submodels (models for  subsystems or model parts) are present that are easy to describe by means of first-principles models, whereas other submodels are hard to derive. In the latter case, the reasons for these \textit{elusive} characteristics  might be an inherent complexity or a lack of knowledge. Examples of systems that can be divided in these two classes of submodels involve mechanical systems with friction or thermofluidic systems. While the equations of motion governing a mechanical system are often easily to derive, accurate friction models are hard to obtain by first principles, see \cite{Scholl:2024}. Similarly, in thermofluidic systems, heat transfer coefficients can often be derived only roughly from first principles, whereas energy and mass balances equations can be easily established. One outcome would be to use model-free control methods for such systems, like PID controllers or Active Disturbance Rejection Control (ADRC), see \cite{Chen:2016}. Alternatively, data-driven approaches like Reinforcement Learning (RL), see \cite{Chen:2025}, or various data-driven MPC variants may be applicable, see e.g. \cite{Berberich:2021}. While all these model-free methods can be very potent, there seems to be a reasonable trend towards combining existing model knowledge with learning-based methods  to improve both learning speed and sample efficiency -- like in Physically Informed Neural Networks (PINNs), cf. \cite{Cuomo:2022}, or Deep Lagrangian Networks (DLNs), cf. \cite{Lutter:2019}. This fundamental idea of combining prior knowledge with learning-based methods also motivates and applies to our algorithm. Instead of trying to encode knowledge into machine learning methods, we split the system into known subsystems as well as elusive subsystems and represent them accordingly by means of first-principles models and online learning-based methods. Furthermore, we aim to utilize both kind of models within a model-based control structure. For the sake of simplicity, we generally focus in this paper on systems with only a single elusive submodel governed by a single unknown function. Most of the developed methods are, however, expected to be extendable to multiple functions and submodels. In a Gaussian Process (GP) setting, the function to be learned  is often referred to as the {\it hidden function}.

One approach to handle the previously described kind of systems is known as regression and involves the definition of parameterized ansatz functions for the elusive submodels, e.g. using polynomials and a subsequent parameter identification. This can also be done in an online fashion by means of nonlinear Kalman Filter variants, see \cite{Julier:1997}, or moving horizon estimation, see \cite{Haseltine:2005}. If the problem-related function type of the elusive submodel is known, these approaches can be very powerful. If this knowledge is not available, which is quite often the case in practice, these classical methods still leave room for improvement. The usage of Neural Networks (NNs) in the form of Multi-Layer Perceptrons (MLPs), Radial Basis Functions (RBFs) or Deep NNs (DNNs) as a generalization of parametric ansatz functions, could provide a solution here. They are, however, generally subject to the problem of overfitting, which is especially severe in the online learning variants. Here, Gaussian Processes (GPs) represent a powerful alternative tool for data-driven modeling, because they offer flexibility, robustness against overfitting and, especially, a built-in uncertainty measure, see \cite{Schuerch:2020}. The disadvantage of using GPs for online-learning is given by the fact that it does not scale well in terms of data as a non-parametric approach. There exist numerous methods in the literature to tackle this problem. One promising approach is the usage of GPs as parametric functions within Recursive Gaussian Process regression proposed by \cite{Huber:2013}. This method preserves many of the advantages while removing the GP scaling issues w.r.t.~data. For our particular application, the basic RGP algorithm has a major drawback, because it needs a direct measurement of the GP output. Therefore, we derived the  Gaussian Process Submodel Online Learning (GPSOL) algorithm in \cite{RGP_dKF:2025,RGP_Journal:2025}. Furthermore, an also real-time capable algorithm for the inclusion of monotonicity assumptions was presented in \cite{RGP_mon:2025}. Both the GPSOL algorithm and the monotonicity inclusion were successfully experimentally validated for a Vapor Compression Cycle and several pneumatic systems, see \cite{RGP_dKF:2025,RGP_Journal:2025}.

Learning-based control adaptive control have a long history. Classic examples include adaptive control, cf.~\cite{Astroem:2008}, and, especially, Model Reference Adaptive Control (MRAC), see \cite{Nguyen:2018}. Nonlinear systems with uncertainties or disturbances may be controlled using adaptive backstepping, see the control application presented in \cite{Wache:2021}. Most often, only single parameters or lumped variables are learned or estimated, e.g. using disturbance observers, see e.g. \cite{Aschemann:2009}. Nevertheless, there are examples, e.g. \cite{Calise:1998}, where nonlinear algebraic relations are learned and subsequently used in control. In \cite{Sena:2021}, an Exteded Kalman Filter (EKF) is used to learn a NN for model predictive control (MPC). There also exist several implementations for GP models in control: The main focus is here the combination with stochastic MPC, like in \cite{Bradford:2020} and, recently, in \cite{Wietzke:2025}. The reason is that the covariance prediction, which is an intrinsic part of GPs, can be effectively used within the stochastic MPC framework. This uncertainty prediction is also used in the Disturbance Error Rate Limiting (DERL) algorithm to limit the error rate of the GP prediction.

The paper is structured as follows: First, we give a brief overview of the previously presented GPSOL algorithm and DERL filter in Sec.~\ref{sec:Recap}. Then, we derive the adaptive, learning-based controller in Sec.~\ref{sec:Control} and the resulting error dynamics of the controlled system. In Sec.~\ref{sec:error_bounds}, we propose the stability conditions and output error bounds for a general class of systems with a bounded nonlinearity, and in the following Sec.~\ref{sec:usage_of_bounds}, we show how these output error bounds apply to the the error dynamics of the controlled system under consideration. In Sec.~\ref{sec:ProposedUtilization}, we present one potential method of usage of the derived bounds. The error bounds are  numerically verified in Sec. \ref{sec:numval} and experimentally validated in Sec. \ref{sec:exval}. Finally, we finish the paper with conclusions and an outlook. The main contributions of this paper are given by the following derivations and results:
\begin{itemize}
\item a learning-based control for a class of nonlinear system using Gaussian Processes 
\item stability conditions and peak-to-peak gains for the controlled system (based on LMIs) that allow for the quantification of control output error bounds
\item numerical verification and experimental validation of the derived bounds and the overall control structure.
\end{itemize}

\section{Previous Results}
\label{sec:Recap}
In the following, we will briefly present some previous findings and definitions that are essential for the presented control.
\subsection{Gaussian Process Submodel Online Learning (GPSOL)}
\label{sec:GPSOL}
In the basic GPSOL algorithm published in \cite{RGP_dKF:2025}, we consider nonlinear discrete-time systems 
\begin{equation}
\bm{x}_{k+1}=\bm{f}(\bm{x}_k,\bm{u}_k,z_k)+\bm{\nu}_k \,,~ \bm{y}_k=\bm{h}(\bm{x}_k) +\bm{\epsilon}_k  \,, 	\label{eq:NLinDiffEqGPSOL}
\end{equation}
with states $\bm{x}_k\in\mathbb{R}^{n_x}$, Gaussian process noise $\bm{\nu}_k \sim \mathcal{N}(\bm{0},\bm{Q}_x) $  and control inputs $\bm{u}_k\in\mathbb{R}^{n_u}$, noisy measurement outputs $\bm{y}_k\in\mathbb{R}^{n_y}$ with the Gaussian measurement noise $\bm{\epsilon}_k \sim \mathcal{N}(\bm{0},\bm{R})$ and a single, time-invariant hidden function $z_k=z_f(\bm{\zeta}_k)$ that depends only on $n_z$ deterministic inputs $\bm{\zeta}_k=[\zeta_{k,1},\zeta_{k,2},\dots, \zeta_{k,n_z}]^T$. These deterministic inputs into the hidden function could be, for example, measurable external disturbances. As convention, the index $k$ denotes the timestep which is linked to the time $t=k T_s$ by means of the sampling time $T_s$.

The GPSOL algorithm provides a method to learn a GP model for the hidden function  $z_f(\bm{\zeta}_k)$ online by recursively updating the mean values and covariance for a set of user-defined basis vectors. Naturally, the defined resolution of the basis vector grid marks a trade-off between the achievable approximation quality of the hidden function and the corresponding computational effort. Several experimental validations showed, however, that very satisfying results can be achieved for mechatronic systems with sampling rates in the range of milliseconds. We assume that the hidden function as well as its inputs are bounded and also their rates.

\begin{infdef}
For systems of type \eqref{eq:NLinDiffEqGPSOL}, the GPSOL algorithm learns a GP model $\tilde{z}_{f,k}$ of a hidden function ${z}_f$. The learned GP model provides a Gaussian prediction  $\tilde{z}_k \sim \mathcal{N}({\mu}_{k}^I, {c}_{k}^I)$. The prediction noise of the GP model is bounded to the interval  $\sigma_r^2\leq {c}_{k}^I\leq \sigma_K^2$, where $\sigma_K$ is the vertical hyperparameter, which also marks the initial uncertainty of the GP model.  $\sigma_r<<\sigma_K$ represents the remaining uncertainty defined by the user, which is an approximate measure of the highest possible approximation quality of the finite dimensional GP w.r.t. the hidden function. Assuming that the hidden function is identifiable with the available measurements, the GP model provides an approximation of the hidden function: \newline $\tilde{z}_k \rightarrow {z}_k + \nu_{r,k}$ with $\nu_{r,k} \sim \mathcal{N}(0, \sigma_r^2)$ for $k \rightarrow \infty$.
\end{infdef}

\subsection{Previously Derived Bounds}
\label{sec:prevBoudns}
In \cite{RGP_Journal:2025}, we consider a class of input-output-controllable single-input single-output (SISO) systems that can be transformed into the following normal form 
\begin{equation}
\resizebox{\columnwidth}{!}{$
\underbrace{\frac{d}{dt} \left(\left[\begin{matrix} y \\ \dot{y} \\ \ddot{y} \\ \vdots\\  \overset{(n)}{y} \end{matrix}\right] \right)}_{\dot{\bm{y}}(t)}=  \underbrace{\left[\begin{matrix} 0&1&0 & 0&\dots\\ 0&0&1 &0& \dots\\ 0&0&0 &1& \dots  \\ &\vdots&& \\ 0& 0&0&0& \dots \end{matrix}\right]}_{\bm{A}} \underbrace{\left[\begin{matrix} y \\ \dot{y} \\ \ddot{y} \\ \vdots\\  \overset{(n)}{y} \end{matrix}\right]}_{\bm{y}(t)} +\left[\begin{matrix}0\\0\\0\\ \vdots \\a(\bm{y}(t),t) \end{matrix}\right] + \left[\begin{matrix}0\\0\\0\\ \vdots \\b(\bm{y}(t),t) \end{matrix}\right] u+ \left[\begin{matrix}0\\0\\0\\ \vdots\\ e \end{matrix}\right] z(t)$
} \label{eq:OldSystem}
\end{equation}
with state vector $\bm{y}(t)$, scalar input $u(t)$, scalar output $y(t)$, disturbance $z(t)=z_f(\bm{\zeta}(t))$ and the constant disturbance gain $e$. This system can be controlled by the following input-output linearization-based control law that involves an integral part
\begin{equation}
u(t)=\frac{1}{b(\bm{y}(t),t)} \left(\upsilon(t) - a(\bm{y}(t),t) - e\hat{z}(t)\right) \,, \label{eq:OldControl}
\end{equation}
with $\upsilon(t)= \overset{(n+1)}{y_r}(t) +\bm{k}^T \left(\bm{y}_r(t)-\bm{y}(t) \right)+k_I \int \left(y_r(t)-y(t) \right)dt$ and the DERL- filtered estimate for the disturbance $\hat{z}(t)$. The constant vector $\bm{k}$ and the constant $k_I$ correspond to the feedback gains of a linear proportional-integral state feedback, and the vector $\bm{y}(t)$ refers to the measured values for the control output and its time derivatives up to the system order $n$.

As shown in \cite{RGP_Journal:2025} for vanishing initial conditions and purely real-valued eigenvalues of the controlled system, an output bound can be provided 
\begin{equation}
\norm{\dot{z}(t)-\dot{\hat{z}}(t) }{\infty} \frac{|e|}{k_I} \geq \norm{y_r(t)-y(t)}{\infty}  \,, \label{eq:inequality3}
\end{equation}
which only depends on the rate limit on the GP prediction error rate $\norm{\dot{z}(t)-\dot{\hat{z}}(t)}{\infty}$, the integrator gain $k_I$ and disturbance input gain $e$.

Furthermore, \eqref{eq:inequality3} also gives worst-case guarantees according to
\begin{equation}
(\norm{\dot{z}(t)}{\infty}+\norm{\dot{\hat{z}} (t)}{\infty})\frac{|e|}{k_I} \geq \norm{y_r(t)-y(t)}{\infty}  \,. \label{eq:inequality4}
\end{equation}
Thus, if the actual rate of the hidden function output $\dot{z}(t)$ is known to be bounded, a bound on the output error is given by artificially rate-limiting the absolute value of the GP prediction $|\dot{\hat{z}}(t)|\leq z_{lim}$.

\subsection{Disturbance Error Rate Limiting (DERL)}
\label{sec:DERL}
The results presented in Subsec. \ref{sec:prevBoudns} motivated the introduction of the so-called DERL  algorithm in \cite{RGP_Journal:2025}, which utilizes the uncertainty information provided by the GP model to intelligently filter the GP prediction. Please note that both the GPSOL and DERL filter are derived with discrete-time formulations. The GP prediction rates are approximated by $\dot{\tilde{z}}(t) \approx (\tilde{z}_{k+1}-\tilde{z}_k)/T_s$ and $\dot{\hat{z}}(t) \approx (\hat{z}_{k+1}-\hat{z}_k)/T_s$.

\begin{definition}
\label{def:DERL}
\textit{The scalar  $\hat{z}(t)$  is bounded and a filtered version of the GPSOL output $\tilde{z}(t)$ which satisfies for all time steps $k=t/T_s$ one of the cases:
\begin{enumerate}
\item the inequality $|\dot{z}_e (t)| \leq  z_{lim}$ with a probability of $p_{lim}$ 
\item the inequality $|\dot{\hat{z}}(t)|\leq  z_{lim}$ deterministically.
\end{enumerate}
Here, the bound is a constant, positive parameter $z_{lim}\in \mathbb{R}^+$ and $\dot{z}_e(t)=\dot{\hat{z}}(t)-\dot{z}(t)$ denotes the prediction error rate. $z(t)$ will be formally defined in Def. \ref{def:sys}. The probability $p_{lim}$ is user-defined.
}
\end{definition}
\begin{cor}
\label{cor:DERL}
Def. \ref{def:DERL} guarantees $\norm{\dot{z}_e(t)}{\infty}\leq z_{lim}+\norm{\dot{z}(t)}{\infty} $ with probability $p_{t}$. 
\end{cor}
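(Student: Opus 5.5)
The plan is to establish the bound separately in each of the two cases admitted by Def.~\ref{def:DERL}. At every time step $k$ exactly one of these cases is active. We then combine the two bounds into a single uniform statement about $\norm{\dot{z}_e(t)}{\infty}$.

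\emph{Case 1.} Here $|\dot{z}_e(t)|\leq z_{lim}$ holds with probability $p_{lim}$. Since $\norm{\dot{z}(t)}{\infty}\geq 0$, we immediately get $|\dot{z}_e(t)|\leq z_{lim}\leq z_{lim}+\norm{\dot{z}(t)}{\infty}$ on that event.

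\emph{Case 2.} Here the DERL filter enforces $|\dot{\hat{z}}(t)|\leq z_{lim}$ deterministically. We use the definition $\dot{z}_e=\dot{\hat{z}}-\dot{z}$ and the triangle inequality:
\begin{equation*}
|\dot{z}_e(t)|\leq |\dot{\hat{z}}(t)|+|\dot{z}(t)|\leq z_{lim}+\norm{\dot{z}(t)}{\infty}.
\end{equation*}
This holds with certainty. The finiteness of $\norm{\dot{z}(t)}{\infty}$ follows from the standing assumption in Subsec.~\ref{sec:GPSOL} that the hidden function, its inputs and their rates are bounded.

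Taking the supremum over $t$, the pointwise bound becomes a bound on the $\infty$-norm. It holds on the event that all time steps falling into case~1 satisfy their probabilistic inequality; the case-2 steps contribute no uncertainty. Thus the overall probability $p_t$ is the probability of this joint event, and $p_t\geq p_{lim}$ at each individual step.

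The main obstacle is stating precisely what $p_t$ is. If $p_t$ is meant per time step, it simply equals $p_{lim}$, or $1$ at case-2 steps, and the proof is complete. For a uniform-in-time guarantee, one would need either independence across steps, giving a product of the $p_{lim}$ over case-1 steps, or a union bound. I would adopt the per-time-step interpretation: take $p_t\in\{p_{lim},1\}$ according to the active case, and note the uniform version as a remark.
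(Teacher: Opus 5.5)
Your case analysis is the paper's argument: case (1) follows from $z_{lim}\le z_{lim}+|\dot{z}|$, and case (2) from the triangle inequality $|\dot{z}_e|=|\dot{\hat{z}}-\dot{z}|\le|\dot{\hat{z}}|+|\dot{z}|\le z_{lim}+|\dot{z}|$. The paper does not compute $p_t$; it defers it to Cor.~17 in \cite{RGP_Journal:2025}. Your deterministic part therefore matches the paper's proof.

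The flaw is the interpretation of $p_t$ you settle on. The claim is about $\norm{\dot{z}_e(t)}{\infty}$, a supremum over all time steps, and it is used in that form in \eqref{eq:inequality5} and \eqref{eq:p2p_norm2}.
\begin{itemize}
\item The per-step reading $p_t\in\{p_{lim},1\}$ only gives ``for each fixed $k$, $|\dot{z}_e|\le z_{lim}+\norm{\dot{z}}{\infty}$ with probability $p_{lim}$''. That says nothing about all steps holding at once, so under this reading the corollary's conclusion does not follow.
\item The joint event you describe one paragraph earlier is the right one. This is why the probability carries the index $t$, and why the remark after the corollary says $p_t\approx 1$ only if $p_{lim}$ is set sufficiently high.
\item With the joint event, however, $p_t\geq p_{lim}$ does not follow. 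An intersection of events is no more likely than any one of them, so $p_t$ is at most each case-(1) step's probability and generally decreases as the horizon grows.
\end{itemize}
You can quantify $p_t$ without any independence assumption by a union bound over the $N_t$ time steps up to $t$. Case-(2) steps contribute no failure probability and case-(1) steps at most $1-p_{lim}$ each, so $p_t\geq 1-N_t(1-p_{lim})$. Alternatively, do as the paper does and defer $p_t$ to the cited result. Either way, drop the per-step interpretation.
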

 \begin{pf}
 For case (2), the respective inequality can be transformed to $|\dot{\hat{z}}|+|\dot{z}|\leq  z_{lim}+|\dot{z}|$. Since it holds that $|\dot{z}_e|=|\dot{\hat{z}}-\dot{z}|\leq |\dot{z}|+|\dot{\hat{z}}|$, case (2) guarantees Cor. \ref{cor:DERL}. Since  $z_{lim}\leq z_{lim}+|\dot{z}|$, case (1) also guarantees Cor. \ref{cor:DERL}. For the determination of $p_t$, please refer to Cor. 17 in \cite{RGP_Journal:2025}. 
  \qed 
 \end{pf}
\begin{rem}
As elaborated in Remark 18 in \cite{RGP_Journal:2025}, $p_{t}\approx 1$ might be assumed for most practical applications if $p_{lim}$ is set sufficiently high.
\end{rem}
For a system defined in Sec. \ref{sec:prevBoudns} the Cor. \ref{cor:DERL} guarantees the bound		\begin{equation}
(\norm{\dot{z}(t)}{\infty}+z_{lim})\frac{|e|}{k_I} \geq \norm{y_r(t)-y(t)}{\infty}  \,. \label{eq:inequality5}
\end{equation}
Naturally, this output error bound on $\norm{y_r-y}{\infty} $ is only guaranteed in the particular case of system \eqref{eq:OldSystem} together with controller \eqref{eq:OldControl}. The main goal of this paper is, hence, the relaxation regarding the assumption of a constant disturbance input gain $e$, which allows for the application to a wider class of controlled systems.

\section{Control Design}
\label{sec:Control}
In this section, we develop the control structure depicted in Fig.~\ref{pic:RGP_Control} and derive the resulting error dynamics of the controlled system. As defined in Subsec.~\ref{sec:adapcontr} and Subsec.~\ref{sec:ex_model_learning}, the output $z(t)$ of the hidden function is estimated by both the adaptive controller, which provides $\overline{z}(t)$, and by the GP model learned by GPSOL, which provides the estimate $\tilde{z}(t)$. This GP prediction is filtered afterwards by means of the DERL filter to provide the filtered prediction $\hat{z}(t)$. The DERL filter is parameterized using the peak-to-peak gain $\gamma$ according to Sec.~ \ref{sec:usage_of_bounds}.
\begin{figure}
	 \begin{center}
		 \includegraphics[width=1\linewidth]{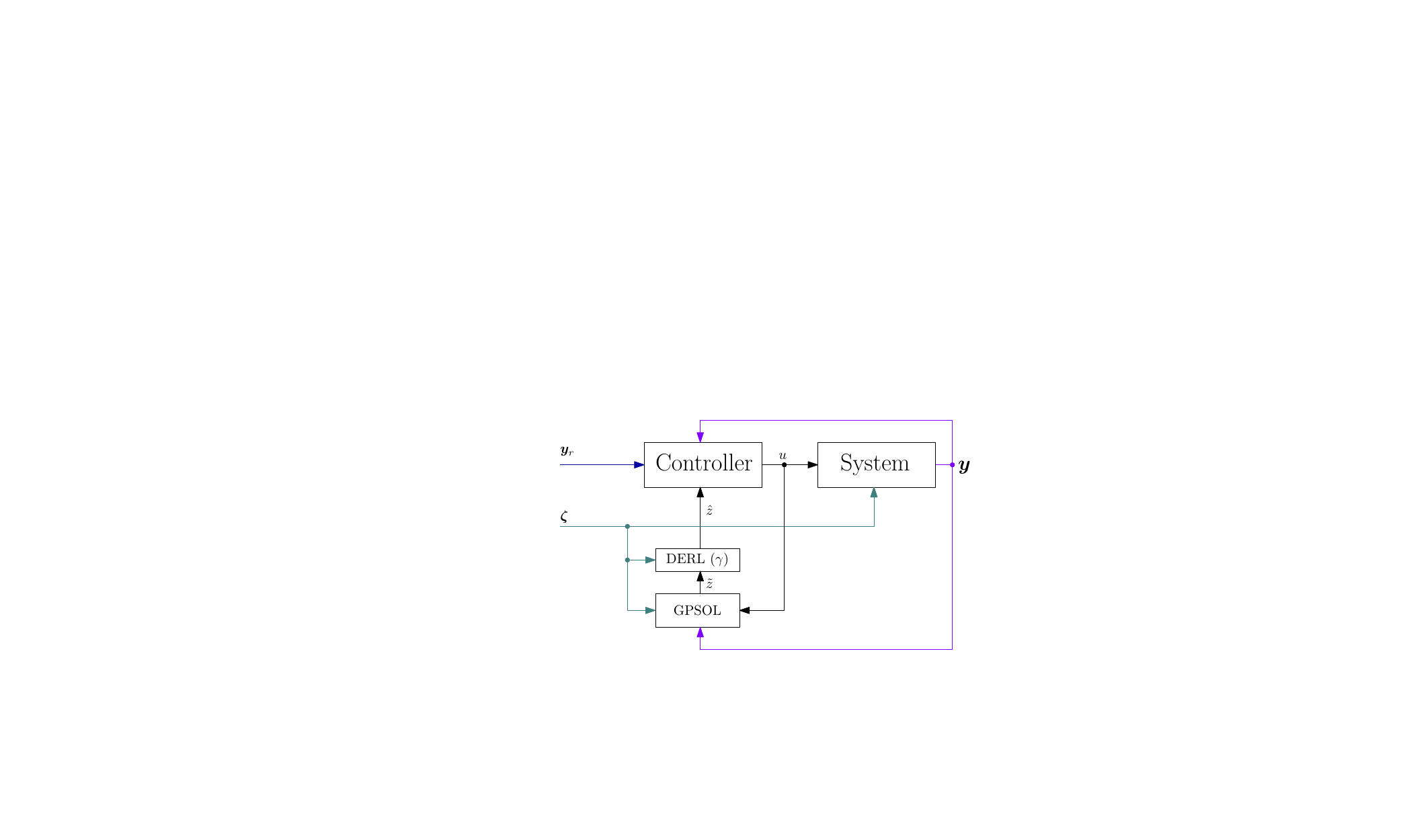}
		  \caption{Schematic structure of the control scheme including the GPSOL algorithm and DERL filter.} 
		  \label{pic:RGP_Control}
	 \end{center}
\end{figure}

\subsection{System}
\label{sec:System}
\begin{definition}
\label{def:sys}
\textit{ For the scalar control input $u(t)$,  the scalar measurable control output $y(t)$, a nonlinear, input-affine, first-order, single-input single-output (SISO) system is given by the following ODE
\begin{equation} 
\dot{y}(t)=a(y,t)+b(y,t)u(t)+e(y,t)z(t) \,, \label{eq:SystemBase}
\end{equation}
 with the Lipschitz-continuous gains $a(y(t),t)$, $b(y(t),t)$ and $e(y(t),t)$, where $b(y(t),t)\neq 0$ holds.  We define the disturbance as some Lipschitz-continuous function
  \begin{equation}
  z_f: \bm{Z}_{in} \rightarrow \mathbb{R},\bm{\zeta} \rightarrow z_f(\bm{\zeta}) , z_f \in C^1(\bm{Z}_{in})\,.
 \end{equation}
 Here, the inputs $\bm{\zeta}(t)$ are perfectly measurable and their set $\bm{Z}_{in} \subset \mathbb{R}^{n_z}$  is bounded. Furthermore, their rates $\dot{\bm{\zeta}}(t)$ are bounded. To abbreviate the notation, we define $z(t)= z_f(\bm{\zeta})$.}
\end{definition}
\begin{cor}
The disturbances $\{z(t)\}$ and their rates $\{\dot{z}(t)\}$ are bounded.
\end{cor}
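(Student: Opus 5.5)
The plan is to bound $z(t)$ and $\dot z(t)$ separately, each from the hypotheses of Def.~\ref{def:sys}, treating $z(t)=z_f(\bm{\zeta}(t))$ as a composition.

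\emph{Boundedness of $z$.} Since $z_f$ is Lipschitz continuous on $\bm{Z}_{in}$ with some constant $L$, and $\bm{Z}_{in}$ is bounded, pick any fixed reference point $\bm{\zeta}_0\in\bm{Z}_{in}$. For every admissible input,
\begin{equation*}
|z_f(\bm{\zeta})|\leq |z_f(\bm{\zeta}_0)|+L\,\norm{\bm{\zeta}-\bm{\zeta}_0}{}\leq |z_f(\bm{\zeta}_0)|+L\,\mathrm{diam}(\bm{Z}_{in})<\infty .
\end{equation*}
Because $\bm{\zeta}(t)\in\bm{Z}_{in}$ for all $t$, this gives a uniform bound on $\{z(t)\}$. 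Equivalently, $z_f$ extends continuously to the compact closure of $\bm{Z}_{in}$, where it attains a maximum.

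\emph{Boundedness of $\dot z$.} Since $z_f\in C^1(\bm{Z}_{in})$, the chain rule gives $\dot z(t)=\nabla z_f(\bm{\zeta}(t))^T\dot{\bm{\zeta}}(t)$. Lipschitz continuity with constant $L$ implies that the gradient is bounded by $L$ wherever it exists, since every directional derivative is a limit of difference quotients bounded by $L$. The Cauchy--Schwarz inequality then yields
\begin{equation*}
|\dot z(t)|\leq L\,\norm{\dot{\bm{\zeta}}(t)}{}\leq L\sup_t\norm{\dot{\bm{\zeta}}(t)}{},
\end{equation*}
and the right-hand side is finite because the rates $\dot{\bm{\zeta}}$ are assumed bounded.

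\emph{Main obstacle.} The only delicate step is justifying $\norm{\nabla z_f}{}\leq L$ and the use of the chain rule when $\bm{\zeta}(t)$ reaches the boundary of $\bm{Z}_{in}$. I would handle this by interpreting $C^1(\bm{Z}_{in})$ as $C^1$ on an open neighbourhood, or by deriving the bound directly from difference quotients. That direct route works as follows:
\begin{equation*}
|z(t+h)-z(t)|\leq L\,\norm{\bm{\zeta}(t+h)-\bm{\zeta}(t)}{}\leq L\,|h|\sup\norm{\dot{\bm{\zeta}}}{}.
\end{equation*}
This shows that $z$ is Lipschitz in $t$ with that constant, so $|\dot z|$ is bounded wherever the derivative exists, which is everywhere by the $C^1$ assumption. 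This route avoids gradients entirely, so I would present it as the main argument.
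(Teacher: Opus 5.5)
Your proof is correct, but it takes a different route from the paper. The paper's one-line proof cites Lipschitz continuity, $C^1$ regularity and the extreme value theorem without spelling out the steps. Its natural reading is that the continuous functions $z_f$ and $\nabla z_f$ attain maxima on the bounded input set, and then $\dot z=\nabla z_f^T\dot{\bm{\zeta}}$ is bounded because $\dot{\bm{\zeta}}$ is.

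You instead use the Lipschitz constant $L$ quantitatively, with no compactness argument:
\begin{itemize}
\item $|z|\leq|z_f(\bm{\zeta}_0)|+L\,\mathrm{diam}(\bm{Z}_{in})$;
\item via the mean value inequality for $\bm{\zeta}$, the difference-quotient estimate $|\dot z(t)|\leq L\sup_s\lVert\dot{\bm{\zeta}}(s)\rVert$.
\end{itemize}

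Your version gives explicit constants and does not depend on compactness, which matters because $\bm{Z}_{in}$ is only assumed bounded, not closed. For $z_f$ itself this is harmless: a Lipschitz function extends continuously to the compact closure, as you note, and the extreme value theorem then applies. For the rate, however, continuity of $\nabla z_f$ on a bounded non-closed set does not by itself bound the gradient. So it is really the global Lipschitz hypothesis that carries the bound on $\dot z$. Your argument makes this explicit and uses $C^1$ only to guarantee that $\dot z$ exists.

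Conversely, if $\bm{Z}_{in}$ is compact, or $z_f$ is $C^1$ on a neighbourhood of its closure, the paper's route needs no Lipschitz constant at all. It also sidesteps your boundary concern about the chain rule.
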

 \begin{pf}
 This follows directly from Lipschitz continuity and continuous differentiability of  $z_f$,  and the boundedness of its inputs $\bm{\zeta}$ and their rates $\dot{\bm{\zeta}}$ under application of the extreme value theorem.  \qed
 \end{pf}

\subsection{Adaptive Control Design}
\label{sec:adapcontr}
\begin{prop}
For positive constant gains $K_P>0$ and $K_I>0$ as well as constant disturbances according to $\dot{z}=0$, the following control law
\begin{equation}
u(t)= \frac{1}{b(y,t)}\left(\dot{y}_r(t)+K_P e_y(t) -a(y,t)-e(y,t) \overline{z}(t) \right), \label{eq:ControlLaw}
\end{equation}
with the control error $e_y(t)=y_r(t)-y(t)$  w.r.t.~the reference $y_r(t)$, and the parameter update law
\begin{equation}
\dot{\overline{z}}(t) =- K_I e(y,t) e_y(t) \label{eq:ParamUpdateLaw}
\end{equation}
for the parameter estimate $\overline{z}(t)$ exponentially stabilize the system defined in Def. \ref{def:sys}.
\end{prop}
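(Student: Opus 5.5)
The plan is to derive the closed-loop error dynamics in the coordinates $(e_y,\varepsilon)$, where $\varepsilon(t)=z(t)-\overline{z}(t)$ is the estimation error, and then to exhibit a strict quadratic Lyapunov function. Inserting the control law \eqref{eq:ControlLaw} into \eqref{eq:SystemBase} cancels $a(y,t)$ and $b(y,t)$, which is allowed because $b\neq 0$. This gives $\dot y=\dot y_r+K_Pe_y+e(y,t)\,(z-\overline z)$, and hence
\begin{equation*}
\dot e_y=-K_P e_y-e(y,t)\,\varepsilon .
\end{equation*}
With $\dot z=0$ and the update law \eqref{eq:ParamUpdateLaw}, the estimation error satisfies
\begin{equation*}
\dot\varepsilon=-\dot{\overline z}=K_I\,e(y,t)\,e_y .
\end{equation*}
The closed loop is therefore a linear time-varying system in $(e_y,\varepsilon)$ with the time-varying coupling coefficient $e(y(t),t)$. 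Its equilibrium is $(0,0)$, and "exponential stabilization" is understood to refer to this equilibrium.

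First I take the standard adaptive Lyapunov candidate $V=\tfrac12 e_y^2+\tfrac{1}{2K_I}\varepsilon^2$. The cross terms cancel, so $\dot V=-K_Pe_y^2\le 0$. This already yields boundedness of $(e_y,\varepsilon)$ and, by Barbalat's lemma, $e_y\to 0$. However, $\dot V$ is only negative semidefinite, so this step alone does not give an exponential rate.

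To obtain exponential stability, I will assume, as is implicit for the statement to hold, that the disturbance gain is uniformly bounded away from zero and bounded, i.e.\ $0<e_{\min}\le |e(y,t)|\le e_{\max}$. Without this, $\varepsilon$ is unobservable from $e_y$ whenever $e=0$. Since $e$ is continuous and never zero, its sign $s\in\{\pm1\}$ is constant. I then use the cross-term augmented function
\begin{equation*}
W=V+\delta s\, e_y\varepsilon, \qquad \delta>0 .
\end{equation*}
Differentiating the cross term gives
\begin{equation*}
\tfrac{d}{dt}(e_y\varepsilon)=-K_Pe_y\varepsilon-e\,\varepsilon^2+K_I e\, e_y^2 .
\end{equation*}
Multiplying by $s$ turns $-e\varepsilon^2$ into $-|e|\varepsilon^2\le-e_{\min}\varepsilon^2$. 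This is exactly the missing negative term in $\varepsilon$, and no derivative of $e$ is needed. Consequently,
\begin{equation*}
\dot W\le -(K_P-\delta K_I e_{\max})e_y^2+\delta K_P|e_y\varepsilon|-\delta e_{\min}\varepsilon^2 .
\end{equation*}

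The main obstacle is choosing $\delta$ so that two conditions hold simultaneously. First, $W$ must be positive definite, which requires $\delta^2<1/K_I$. Second, $\dot W\le -c\,(e_y^2+\varepsilon^2)$ must hold for some $c>0$. I will handle the indefinite term with Young's inequality, $\delta K_P|e_y\varepsilon|\le \tfrac{\delta e_{\min}}{2}\varepsilon^2+\tfrac{\delta K_P^2}{2e_{\min}}e_y^2$. A sufficiently small $\delta$, namely $\delta<\min\{1/\sqrt{K_I},\,K_P/(K_Ie_{\max}+K_P^2/(2e_{\min}))\}$, then makes all coefficients negative. With $W$ sandwiched between two quadratic forms, $\dot W\le -\lambda W$ for some $\lambda>0$, and the comparison lemma gives exponential decay of $(e_y,\varepsilon)$. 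This decay is global because the error system is linear in $(e_y,\varepsilon)$ for any trajectory $y(t)$, which stays well defined since $y=y_r-e_y$ remains bounded.
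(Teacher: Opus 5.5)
Your proof is correct, but it takes a genuinely different route from the paper. The paper gives no direct argument here. It defers to Cor.~\ref{cor:appl4}: it writes the closed loop in the quasi-linear form of Prop.~\ref{prop1} (your $\varepsilon$ is $-e_z$ there, with $\hat{z}\equiv 0$), encloses the gain in a polytope with vertices $\bm{A}_1,\bm{A}_2$ built from bounds $e^-\le\overline{e}\le e^+$, and reads off exponential decay from the common quadratic Lyapunov function $\bm{x}^T\bm{P}\bm{x}$ certified numerically by the LMIs of Thm.~\ref{thm:appl1} with $\dot{z}_e=0$. That argument is conditional on LMI feasibility, which the stated hypotheses ($K_P,K_I>0$) do not by themselves ensure.

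Your cross-term function $W=V+\delta s\,e_y\varepsilon$ closes this gap analytically. It shows that for all $K_P,K_I>0$ the only extra ingredient needed is a bounded, sign-definite gain $0<e_{\min}\le|e|\le e_{\max}$, and it gives explicit constants. The LMIs require this implicitly anyway: if $e^-\le 0\le e^+$, the polytope contains $\overline{e}=0$, where $\bm{A}$ has eigenvalue $0$, so no $\bm{P}$ can satisfy \eqref{eq:LMI8}--\eqref{eq:LMI9}. Moreover, $W$ is a fixed quadratic form valid for every admissible $e$, so it is itself a common Lyapunov function for the polytope. After scaling it down to absorb the $\bm{P}\bm{b}\bm{b}^T\bm{P}$ term, it certifies feasibility of \eqref{eq:LMI8}--\eqref{eq:LMI9} for a small enough decay parameter. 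Note that your cross-term weight $\delta$ is unrelated to the paper's $\delta$.

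In exchange, the paper's numerical route uses the same optimised $\bm{P}$ to produce the peak-to-peak gain from $\dot{z}_e$ to $e_y$ needed for the DERL bound. Its constants are optimised over all quadratic certificates rather than fixed by your hand-chosen weight and Young split.

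One caveat applies to both arguments. Bounds on $|e|$ usually hold only on a region (e.g.\ $e(y)=y$ in Sec.~\ref{sec:numval}). A fully rigorous version would therefore restrict to an invariant sublevel set of $W$ on which the bounds hold, just as the paper's polytope is only meaningful for $|e_y|\le e_{y,lim}$.
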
 
 \begin{pf}
 The stability proof is given by Cor. \ref{cor:appl4}. \qed
 \end{pf}
\begin{rem}
For $\dot{z}=0$, this controller is equivalent to an adaptive backstepping controller for the first-order system, see \cite{Krstic:1995,khalil:2002} for an overview.
\end{rem}

\subsection{Extension and Model learning}
\label{sec:ex_model_learning}
In the following, we remove the assumption $\dot{z}=0$ and return to the original assumptions regarding $z(t)$ as given in Subsec.~\ref{sec:System}. Furthermore, we now utilize the GPSOL algorithm described in Subsec. \ref{sec:GPSOL} together with the DERL filter defined in Def. \ref{def:DERL} to provide a learned and filtered estimate $\hat{z}(t)$ for the disturbance.  We use this estimate to extend the control law \eqref{eq:ControlLaw} to
\begin{equation}
u= \frac{1}{b(y,t)}\left(\dot{y}_r+K_P e_y -a(y,t)-e(y,t) \left(\overline{z}(t)+\hat{z}(t)\right) \right) \label{eq:ExControlLaw}
\end{equation} 
and introduce the error of this estimate $z_e(t)=\hat{z}(t)-z(t)$.

\begin{prop} \label{prop1} The system \eqref{eq:SystemBase}  controlled by \eqref{eq:ExControlLaw} and \eqref{eq:ParamUpdateLaw} can be described by means of the following quasi-linear error dynamics
\begin{equation}
\dot{\bm{x}}(t)=\underbrace{\left[ \begin{matrix}
-K_P& \overline{e}(e_y,y_r,t) \\
-K_I \overline{e}(e_y,y_r,t)&0 
\end{matrix}\right]}_{\bm{A}(\bm{x}(t),t)}\bm{x}(t) + \underbrace{\left[ \begin{matrix} 0 \\
1 
\end{matrix}\right]}_{\bm{b}} v(t)  \label{eq:QlinSysDyn}
\end{equation}
with the input $v(t)=\dot{z}_e(t)$ and the state vector  $\bm{x}(t)=[e_y(t),e_z(t)]^T$, where $e_y(t)=y_r(t)-y(t)$ and $e_z(t)=z_e(t)+\overline{z}(t)$ and the new disturbance input gain $\overline{e}(e_y(t),y_r(t),t)=e(y(t),t)$ with $y(t)=y_r(t)-e_y(t)$.
\end{prop}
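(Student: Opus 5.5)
The statement follows from a direct computation. I would differentiate the two components of $\bm{x}(t)=[e_y(t),e_z(t)]^T$ along the closed-loop trajectories, substitute the extended control law \eqref{eq:ExControlLaw} and the update law \eqref{eq:ParamUpdateLaw}, and then collect the terms into the matrix form \eqref{eq:QlinSysDyn}. No earlier result beyond the system definition Def.~\ref{def:sys} is needed.

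\emph{First row.} From $e_y=y_r-y$ and \eqref{eq:SystemBase} we get
\begin{equation*}
\dot e_y=\dot y_r-a(y,t)-b(y,t)u-e(y,t)z .
\end{equation*}
Since $b(y,t)\neq 0$, inserting \eqref{eq:ExControlLaw} gives
\begin{equation*}
b(y,t)u=\dot y_r+K_P e_y-a(y,t)-e(y,t)\bigl(\overline{z}+\hat z\bigr).
\end{equation*}
The terms $\dot y_r$ and $a(y,t)$ cancel, which leaves
\begin{equation*}
\dot e_y=-K_P e_y+e(y,t)\bigl(\overline{z}+\hat z-z\bigr)=-K_P e_y+e(y,t)\bigl(\overline{z}+z_e\bigr)=-K_P e_y+e(y,t)\,e_z ,
\end{equation*}
using $z_e=\hat z-z$ and $e_z=z_e+\overline{z}$.

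\emph{Second row.} We have $\dot e_z=\dot z_e+\dot{\overline{z}}$. Inserting \eqref{eq:ParamUpdateLaw} gives $\dot e_z=-K_I e(y,t)e_y+v$ with $v=\dot z_e$. Together the two rows are exactly $\dot{\bm x}=\bm A\bm x+\bm b v$ with $\bm b=[0,1]^T$.

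\emph{Quasi-linear form.} It remains to make the state matrix depend only on $(\bm x,t)$ and exogenous signals. Substituting $y=y_r-e_y$ gives $e(y,t)=e(y_r-e_y,t)=:\overline{e}(e_y,y_r,t)$. Because $y_r(t)$ is an exogenous time signal, $\bm A$ depends only on $e_y$, which is a component of $\bm x$, and on $t$. Lipschitz continuity of $e$ carries over to $\overline{e}$, so the right-hand side is locally Lipschitz in $\bm x$.

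\emph{Main point of care.} The computation itself is routine. The only delicate step is the regularity needed to write $v=\dot z_e$. This requires $\hat z$ to be differentiable, or at least absolutely continuous, so that the error dynamics hold in the Carathéodory sense. I would handle this as Sec.~\ref{sec:DERL} does: interpret $\dot{\hat z}$ through its rate approximation, which is bounded by Def.~\ref{def:DERL}, and use that $z\in C^1$ by Def.~\ref{def:sys}. With that interpretation, the identity above holds almost everywhere and \eqref{eq:QlinSysDyn} is well posed.
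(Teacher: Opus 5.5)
Your proof is correct and takes the same approach as the paper. Both substitute \eqref{eq:ExControlLaw} into \eqref{eq:SystemBase} to get $\dot e_y=-K_P e_y+e(y,t)e_z$, differentiate $e_z=z_e+\overline{z}$ using \eqref{eq:ParamUpdateLaw}, and rename $e(y_r-e_y,t)$ as $\overline{e}(e_y,y_r,t)$; your remarks on Lipschitz continuity and on interpreting $\dot{\hat z}$ go beyond what the paper addresses but do not change the argument.
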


 \begin{pf}
 If we evaluate the system dynamics \eqref{eq:SystemBase} with the control law \eqref{eq:ExControlLaw} and the parameter update law \eqref{eq:ParamUpdateLaw}, we obtain the following ODEs describing the controlled system
 \begin{align}
 \dot{y}&=\dot{y}_r+K_P e_y+e(y,t) \underbrace{\left(z-\hat{z}-\overline{z}\right)}_{-e_{z}},  \\ \label{eq:SyDyn2}
 \dot{\overline{z}} &=- K_I e(y,t) {e}_y \,.
 \end{align}
 This system can be reformulated and yields the error dynamics by utilizing the new estimation error $e_{z}=-z+\hat{z}+\overline{z}=z_e+\overline{z}$. With the corresponding time derivative  $\dot{e}_{z}=\dot{z}_e+\dot{\overline{z}}$, this results in the following ODEs for the error system
 \begin{equation}
 \dot{e}_y=-K_P e_y+e(y,t) e_{z},~\dot{e}_{z} =- K_I e(y,t) {e}_y +\dot{z}_e, \label{eq:SyDyn3}
 \end{equation}
 which are easily stated in quasi-linear form. Lastly we only need to introduce the new disturbance input gain as defined in Prop.~\ref{prop1}, which takes the error-system states into account. \qed
 \end{pf}
Proposition \ref{prop1} will be used in Sec. \ref{sec:error_bounds} to define the stability and error bounds of the controlled system under usage of LMIs that will be derived in Sec.~\ref{sec:usage_of_bounds}.

\section{Stability and Error Bounds}
\label{sec:error_bounds}
In this section we consider the general strictly proper nonlinear system, representable in a quasi-linear form
\begin{equation}
\dot{\bm{x}}(t)=\bm{A}(\bm{x},t)\bm{x}(t)+\bm{b}u(t),~ y(t)=\bm{c}^T \bm{x}(t) \label{eq:SystemLMI}
\end{equation}
with state vector $\bm{x}(t)\in \mathbb{R}^{n_x}$, scalar input $u(t)$ and scalar output $y(t)$. The input- and output gain vectors $\bm{b}$ and $\bm{c}^T$ are assumed to be constant, while the state matrix  $\bm{A}(\bm{x}(t),t)$ may be state- and time-dependent but bounded by means of $L$ vertices $\bm{A}_l$ of a corresponding polytope. If a joint Lyapunov function holds simultaneously for all $L$ vertices $\bm{A}_l$ of a polytopic system, then it also holds for the convex combination of the vertices
\begin{equation}
\bm{A}(\bm{x},t)=\lambda_1(t) \bm{A}_1+ \dots + \lambda_L(t) \bm{A}_L\,, \lambda_l(t)\geq 0 \,, \sum_{l=1}^L{\lambda_l}=1 \,, \label{eq:Polytope}
\end{equation}
see e.g. \cite[pp.51-53]{Boyd:1994}.

\begin{prop}
\label{prop2}
If the following Linear Matrix Inequalities (LMIs) are satisfied simultaneously for all vertices $\bm{A}_l$ of a polytopic description of \eqref{eq:SystemLMI} for a given $\delta>0$
\begin{equation}
\left[\begin{matrix}
-1 & \bm{b}^T\bm{P} \\
\bm{P}\bm{b} & \bm{A}_l^T  \bm{P}+ \bm{P}\bm{A}_l+\delta  \bm{P} 
\end{matrix}\right]  \prec 0 \label{eq:LMI1}
\end{equation}
and
\begin{equation}
\bm{P}\succ 0,  \label{eq:LMI2}
\end{equation}
with $\bm{P}=\bm{P}^T$, the quadratic Lyapunov function
\begin{equation}
V(\bm{x}(t))=\bm{x}(t)^T \bm{P} \bm{x}(t) \label{eq:Ljap}
\end{equation}
for system \eqref{eq:SystemLMI} satisfies the following inequality at all times
\begin{equation}
V(x(t))< exp(-\delta t)  V(\bm{x}(0)) +\norm{u(t)}{\infty}^2/\delta \,. \label{eq:ie1}
\end{equation}
\end{prop}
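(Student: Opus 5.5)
The plan is to turn the LMI \eqref{eq:LMI1} into a dissipation inequality along trajectories of \eqref{eq:SystemLMI}, and then integrate it with a comparison (Gr\"onwall) argument.

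\textbf{Step 1: from vertices to the actual system.} Fix $t$ and write $\bm{A}(\bm{x},t)$ as the convex combination \eqref{eq:Polytope}. The matrix in \eqref{eq:LMI1} depends affinely on $\bm{A}_l$, and its other blocks do not depend on $l$. Taking the combination with weights $\lambda_l(t)$, and using $\sum_l \lambda_l = 1$, therefore gives the same negative definite inequality with $\bm{A}_l$ replaced by $\bm{A}(\bm{x}(t),t)$. This is the standard polytopic argument cited from \cite{Boyd:1994}.

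\textbf{Step 2: dissipation inequality.} Multiply the resulting inequality from the left by $[u,\ \bm{x}^T]$ and from the right by its transpose, for any $(u,\bm{x})\neq 0$. The terms $\bm{x}^T(\bm{A}^T\bm{P}+\bm{P}\bm{A})\bm{x}+2u\,\bm{b}^T\bm{P}\bm{x}$ combine to $\dot V$ along \eqref{eq:SystemLMI}, so
\begin{equation*}
\dot V(\bm{x}(t)) + \delta V(\bm{x}(t)) - u(t)^2 < 0 \quad \text{whenever } (u(t),\bm{x}(t))\neq 0 .
\end{equation*}
Hence $\dot V \le -\delta V + u(t)^2 \le -\delta V + \norm{u}{\infty}^2$, with strict inequality except in the trivial case $u=0,\ \bm{x}=0$. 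The condition \eqref{eq:LMI2} ensures that $V$ is a genuine Lyapunov function, i.e.\ nonnegative.

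\textbf{Step 3: integration.} Multiply by $\exp(\delta t)$ to get $\frac{d}{dt}\big(\exp(\delta t)V\big)\le \exp(\delta t)\norm{u}{\infty}^2$. Integrating from $0$ to $t$ yields
\begin{equation*}
V(\bm{x}(t)) \le \exp(-\delta t)V(\bm{x}(0)) + \frac{\norm{u}{\infty}^2}{\delta}\big(1-\exp(-\delta t)\big),
\end{equation*}
which is bounded by the right-hand side of \eqref{eq:ie1}.

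\textbf{Main obstacle: strictness.} The strict inequality in \eqref{eq:ie1} needs some care. For $t>0$, strictness of the LMI makes the integrand strictly negative on a set of positive measure, unless the state is identically zero with zero input. Alternatively, the factor $(1-\exp(-\delta t))<1$ gives strictness directly whenever $\norm{u}{\infty}>0$. The degenerate cases ($t=0$, or $u\equiv 0$ with $\bm{x}\equiv 0$) have to be handled separately or read as non-strict. I expect the authors treat this loosely, so I would state the bound with "$\le$", and obtain "$<$" under $\norm{u}{\infty}>0$ and $t>0$ via the factor $1-\exp(-\delta t)$.

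A further technical point is that $\bm{A}$ depends on $\bm{x}$. Solutions are assumed to exist, which is justified by the Lipschitz assumptions; the convex weights $\lambda_l(t)$ are evaluated along the trajectory and need only be measurable, which suffices for the integration in Step 3.
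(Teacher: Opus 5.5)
Your proof is correct. Its skeleton is the paper's: a pointwise bound $\dot V\le-\delta V+u^2$ followed by integration. Where you differ is in how the LMI is linked to that bound.

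The paper proceeds in four steps:
\begin{itemize}
\item it starts from $\dot V$ and removes the cross term $2u\,\bm{b}^T\bm{P}\bm{x}$ with Young's inequality;
\item it introduces the auxiliary ``destabilized'' system $\tilde{\bm{A}}=\bm{A}+\tfrac{\delta}{2}\bm{I}$ to account for the $\delta\bm{P}$ term;
\item it applies the polytope argument to the Riccati-type inequality $\bm{A}_l^T\bm{P}+\bm{P}\bm{A}_l+\bm{P}\bm{b}\bm{b}^T\bm{P}+\delta\bm{P}\prec 0$;
\item only then does it reach \eqref{eq:LMI1} by a Schur complement.
\end{itemize}
You instead read \eqref{eq:LMI1} directly as a quadratic form in $(u,\bm{x})$. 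You also apply convexity to the block matrix itself, which is legitimate because that matrix is affine in $\bm{A}_l$.

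The two routes are equivalent. Young's inequality is exactly the statement $\max_u(-u^2+2u\,\bm{b}^T\bm{P}\bm{x})=(\bm{b}^T\bm{P}\bm{x})^2$. The Schur complement of the $-1$ block is exactly the Riccati-type inequality. The paper's route explains where the LMI comes from. Yours uses the hypothesis exactly as stated, needs neither the auxiliary system nor the Schur step, and makes the comparison (Gr\"onwall) argument explicit.

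Your caution about strictness is justified. For $u\equiv0$ with $t=0$ or $\bm{x}(0)=\bm{0}$, both sides of \eqref{eq:ie1} coincide, so the strict ``$<$'' fails there. The paper passes over this.

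Three small refinements:
\begin{itemize}
\item When $\norm{u}{\infty}>0$, strictness holds for all $t\ge 0$, including $t=0$, because $(1-\exp(-\delta t))/\delta<1/\delta$ there as well.
\item Measurability of the $\lambda_l$ is not needed. Your argument is pointwise in $t$, and only $\dot V$ along the trajectory is integrated.
\item \eqref{eq:LMI2} is never actually used in deriving \eqref{eq:ie1}, since the sign of $\bm{P}$ does not enter. Positivity of $\bm{P}$ only matters later, for the ellipsoidal output bound.
\end{itemize}
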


 \begin{pf}
 The time derivative of the quadratic Lyapunov function \eqref{eq:Ljap} for system \eqref{eq:SystemLMI} is given by 
 \begin{equation}
 \dot{V}(t)=(\bm{x}^T\bm{A}(\bm{x},t)^T  +u\bm{b}^T ) \bm{P} \bm{x}+ \bm{x}^T\bm{P} (\bm{A}(\bm{x},t) \bm{x} +\bm{b} u),
 \end{equation}
 which can be reformulated as follows 
 \begin{equation}
 \dot{V}=\bm{x}^T(\bm{A}(\bm{x},t)^T  \bm{P}+  \bm{P}\bm{A}(\bm{x},t))\bm{x}+ 2 u\bm{b}^T\bm{P}\bm{x}\,. \label{eq:LjapTD}
 \end{equation}
 Due to Youngs inequality,
 \begin{equation}
 2 u\bm{b}^T\bm{P}\bm{x} \leq u^2 +(\bm{b}^T\bm{P}\bm{x})^2 = u^2 +\bm{x}^T\bm{P}\bm{b}\bm{b}^T\bm{P}\bm{x} \label{eq:Young}
 \end{equation} 
 holds. Consequently, we can conservatively approximate \eqref{eq:LjapTD} by means of
 \begin{equation}
 \dot{V}\leq\bm{x}^T(\bm{A}(\bm{x},t)^T  \bm{P}+  \bm{P}\bm{A}(\bm{x},t))\bm{x}+ u^2 +\bm{x}^T\bm{P}\bm{b}\bm{b}^T\bm{P}\bm{x},
 \end{equation}
 which can be rewritten as 
 \begin{equation}
 \dot{V}\leq\underbrace{\bm{x}^T(\bm{A}(\bm{x},t)^T  \bm{P}+  \bm{P}\bm{A}(\bm{x},t)+\bm{P}\bm{b}\bm{b}^T\bm{P})\bm{x}}_{\dot{{V}}_1}+ u^2  \,. \label{eq:iq2} 
 \end{equation}
Let us now consider the artificially destabilized system,
 \begin{equation}
 \dot{\bm{x}}(t)= \tilde{\bm{A}}(\bm{x},t) \bm{x}(t)+ \bm{b} u(t), \label{eq:SysDesta}
 \end{equation}
 with $\tilde{\bm{A}}(\bm{x},t)=\bm{A}(\bm{x},t) +\delta/2 \bm{I}^{n_x}$ and $\delta>0$. An equivalent quadratic Lyapunov function $\tilde{V}=\bm{x}^T \bm{P} \bm{x}$ for system \eqref{eq:SysDesta} under consideration of Youngs inequality \eqref{eq:Young} leads to the inequality
 \begin{equation}
 \dot{\tilde{V}}\leq\underbrace{\bm{x}^T(\bm{A}(\bm{x},t)^T  \bm{P}+  \bm{P}\bm{A}(\bm{x},t)+\bm{P}\bm{b}\bm{b}^T\bm{P}+\delta  \bm{P})\bm{x}}_{\dot{\tilde{V}}_1} +u^2 \,.
 \end{equation}
 Now, it becomes obvious that $\dot{{V}}_1=\dot{\tilde{V}}_1-\delta \bm{x}^T\bm{P}\bm{x}$ holds. By substituting this relation into \eqref{eq:iq2} and rearranging the terms, we can derive the inequality $\dot{V}+\delta \bm{x}^T\bm{P}\bm{x}-u^2\leq \dot{\tilde{V}}_1$. If we can prove that 
 \begin{equation}
 \dot{\tilde{V}}_1<0 ~\mathrm{for}~\bm{x}\neq\bm{0} \,, \label{eq:ie3}
 \end{equation}
 then  $\dot{V}+\delta \bm{x}^T\bm{P}\bm{x}-u^2<0$ for $\bm{x}\neq\bm{0}$ must hold as well. This can be rearranged in the form 
 \begin{equation}
 \dot{V}<-\delta \underbrace{\bm{x}^T\bm{P}\bm{x}}_{V}+u^2\,. \label{eq:ie4}
 \end{equation}
 Due to the linear first-order dynamics, the solution for this inequality \eqref{eq:ie4} can be derived analytically and provides inequality \eqref{eq:ie1}.

 Now, we still need to derive the LMIs to check \eqref{eq:ie3} for system \eqref{eq:SysDesta}. Inequality \eqref{eq:ie3} is guaranteed by finding a $\bm{P}=\bm{P}^T\succ 0$, which fulfills the matrix inequality 
 \begin{equation}
 \bm{A}(\bm{x},t)^T  \bm{P}+  \bm{P}\bm{A}(\bm{x},t)+\bm{P}\bm{b}\bm{b}^T\bm{P}+\delta  \bm{P} \prec 0 \label{eq:LMI4}
 \end{equation}
 for $\bm{A}(\bm{x},t)$. Using the polytopic system description from \eqref{eq:Polytope}, we consider the problem of finding $\bm{P}=\bm{P}^T\succ 0$ that simultaneously fulfills the  matrix inequalities for all $L$ vertices of the polytope
 \begin{equation}
 \bm{A}_l^T  \bm{P}+  \bm{P}\bm{A}_l+\bm{P}\bm{b} 1 \bm{b}^T\bm{P}+\delta  \bm{P} \prec 0, \label{eq:LMI5}
 \end{equation}
 which guarantees \eqref{eq:LMI4} under the given assumptions. These matrix inequalities can be reformulated to \eqref{eq:LMI1} using the Schur complement formula. \qed
 \end{pf}

\begin{prop}
The maximization of a linear function $y=\bm{c}^T\bm{x}$ with ellipsoidal boundaries $h(\bm{x})=\bm{x}^T\bm{P}\bm{x}-V$ is given by 
\begin{equation}
\norm{y}{\infty} =\underset{\bm{x}}{\mathrm{max}} \{\bm{c}^T \bm{x}|h(\bm{x})\leq 0\}=\sqrt{\bm{c}^T \bm{P}^{-1} \bm{c} \cdot{V}} \,. \label{eq:EllipsMax}
\end{equation}
\end{prop}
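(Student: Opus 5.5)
We maximize the linear objective $\bm{c}^T\bm{x}$ over the ellipsoid $\{\bm{x} : \bm{x}^T\bm{P}\bm{x}\leq V\}$ with $\bm{P}\succ 0$ and $V\geq 0$. Two routes are available: a Lagrange-multiplier/KKT argument, or a change of variables combined with the Cauchy--Schwarz inequality. We use the second, because it gives the upper bound and its attainment at the same time and avoids treating the degenerate cases $V=0$ or $\bm{c}=\bm{0}$ separately.

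Since $\bm{P}=\bm{P}^T\succ 0$, the symmetric positive definite square root $\bm{P}^{1/2}$ exists and is invertible. Substitute $\bm{w}=\bm{P}^{1/2}\bm{x}$. The constraint $h(\bm{x})\leq 0$ becomes $\norm{\bm{w}}{2}^2\leq V$, and the objective becomes $\bm{c}^T\bm{x}=(\bm{P}^{-1/2}\bm{c})^T\bm{w}$. By the Cauchy--Schwarz inequality,
\begin{equation*}
\bm{c}^T\bm{x}\leq \norm{\bm{P}^{-1/2}\bm{c}}{2}\,\norm{\bm{w}}{2}\leq \sqrt{\bm{c}^T\bm{P}^{-1}\bm{c}}\,\sqrt{V}.
\end{equation*}
To show this bound is attained, choose $\bm{w}^\star=\sqrt{V}\,\bm{P}^{-1/2}\bm{c}/\norm{\bm{P}^{-1/2}\bm{c}}{2}$, which corresponds to
\begin{equation*}
\bm{x}^\star=\sqrt{V/(\bm{c}^T\bm{P}^{-1}\bm{c})}\;\bm{P}^{-1}\bm{c}
\end{equation*}
(take $\bm{x}^\star=\bm{0}$ if $\bm{c}=\bm{0}$). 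Then $h(\bm{x}^\star)=0$ and $\bm{c}^T\bm{x}^\star=\sqrt{\bm{c}^T\bm{P}^{-1}\bm{c}\cdot V}$, so the maximum equals this value.

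Finally, we identify the maximum with $\norm{y}{\infty}$. The ellipsoid is symmetric under $\bm{x}\mapsto-\bm{x}$, so the minimum of $\bm{c}^T\bm{x}$ is the negative of the maximum, and therefore $\max|y|$ over the set equals the stated value. In the intended application, $V$ is the bound on the Lyapunov function from \eqref{eq:ie1} of Prop.~\ref{prop2}, so every trajectory remains inside the sublevel set and $\sup_t|y(t)|$ is bounded by \eqref{eq:EllipsMax}.

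The only step that needs care is this last identification of $\norm{y}{\infty}$ with the maximum over the set. Strictly speaking, \eqref{eq:EllipsMax} gives an upper bound on the output along trajectories, which is how it will be used; equality with the maximum holds only over the set itself.
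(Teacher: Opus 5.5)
Your proof is correct. The paper gives no derivation of its own and only cites Gr\"otschel et al.\ (1993), where the result is obtained the same way: the ellipsoid is mapped onto a Euclidean ball (your substitution $\bm{w}=\bm{P}^{1/2}\bm{x}$), Cauchy--Schwarz is applied, and the maximizer is proportional to $\bm{P}^{-1}\bm{c}$. Two points in your write-up go beyond what the paper states: the symmetry step passing from $\max \bm{c}^T\bm{x}$ to $\max|y|$, and the remark that the ``$=$'' with $\norm{y}{\infty}$ is really an upper bound along trajectories, which is exactly how Prop.~\ref{prop:exstab} uses it.
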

 \begin{pf}
 The proof is derived e.g. in \cite{Groetschel:1993}. \qed
 \end{pf}

\begin{prop} 
\label{prop:exstab}
For every system \eqref{eq:SystemLMI} that fulfills the LMIs given in Prop. \ref{prop2}, the following output bounds hold for all times $t>0$
\begin{equation}
y^2(t) \leq \bm{c}^T \bm{P}^{-1} \bm{c} \cdot \left(exp(-\delta t)  \bm{x}(0)^T\bm{P}\bm{x}(0) +\norm{u(t)}{\infty}^2/\delta \right) \,. \label{eq:p2p_full}
\end{equation}
\end{prop}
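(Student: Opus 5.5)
The plan is to combine the Lyapunov decay estimate of Prop.~\ref{prop2} with the ellipsoidal maximization result \eqref{eq:EllipsMax}. No new inequality is needed; the argument only chains these two results at each fixed time $t>0$.

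\textbf{Step 1.} Since the LMIs \eqref{eq:LMI1} and \eqref{eq:LMI2} hold for all vertices $\bm{A}_l$, Prop.~\ref{prop2} applies. It yields, for every $t>0$,
\begin{equation*}
\bm{x}(t)^T\bm{P}\bm{x}(t) = V(\bm{x}(t)) < \overline{V}(t) := exp(-\delta t)\, \bm{x}(0)^T\bm{P}\bm{x}(0) + \norm{u(t)}{\infty}^2/\delta \,.
\end{equation*}
Here $\norm{u(t)}{\infty}$ is the supremum over the whole time axis, so it does not depend on $t$.

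\textbf{Step 2.} Fix $t$ and set $V:=\overline{V}(t)\ge 0$. By Step 1, the state $\bm{x}(t)$ lies in the ellipsoid $\{\bm{x} \mid \bm{x}^T\bm{P}\bm{x}-V\le 0\}$, which is bounded because $\bm{P}\succ 0$. Applying \eqref{eq:EllipsMax} with this $V$ gives $y(t)=\bm{c}^T\bm{x}(t)\le \sqrt{\bm{c}^T\bm{P}^{-1}\bm{c}\cdot V}$. The ellipsoid is symmetric under $\bm{x}\mapsto -\bm{x}$, so the same bound holds for $-\bm{c}^T\bm{x}(t)$, and hence for $|y(t)|$. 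Squaring both sides, which is legitimate since both are nonnegative, gives \eqref{eq:p2p_full}.

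\textbf{Main obstacle.} The delicate point is the validity of Step 1, i.e.\ integrating the differential inequality \eqref{eq:ie4}. Since $\dot V \le -\delta V + u(\tau)^2 \le -\delta V + \norm{u}{\infty}^2$, I would use the comparison lemma (or multiply by $exp(\delta \tau)$ and integrate over $[0,t]$). This gives
\begin{equation*}
V(t)\le exp(-\delta t)V(0)+\norm{u}{\infty}^2\bigl(1-exp(-\delta t)\bigr)/\delta ,
\end{equation*}
which is bounded by the expression used in Step 1.

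Two minor issues remain. The strict inequality in \eqref{eq:ie4} fails at $\bm{x}=\bm{0}$, but the non-strict version suffices because \eqref{eq:p2p_full} is itself non-strict. The case $V=0$ is trivial, since then $\bm{x}(t)=\bm{0}$ and $y(t)=0$.
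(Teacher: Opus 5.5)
Your proposal is correct and follows the paper's own argument: you substitute the Lyapunov bound \eqref{eq:ie1} from Prop.~\ref{prop2} for $V$ in the squared ellipsoidal maximum \eqref{eq:EllipsMax}. Your extra details fill in points the paper's one-line proof leaves implicit: the symmetry of the ellipsoid to cover $|y(t)|$, the explicit integration of \eqref{eq:ie4}, and the switch to the non-strict inequality, which is needed because the strict form of \eqref{eq:ie1} actually fails when $\bm{x}(0)=\bm{0}$ and $u\equiv 0$.
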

 \begin{pf}
 The proof follows directly from the fact that the Lyapunov function \eqref{eq:Ljap} is chosen in a standard quadratic form and, hence, the substitution of \eqref{eq:ie1} into the squared form of \eqref{eq:EllipsMax} is allowed.  \qed
 \end{pf}

\begin{rem}
\label{rem:remark1}
Since Prop. \ref{prop:exstab} holds for arbitrary output vectors $\bm{c}$, any system fulfilling the assumptions in Prop. \ref{prop2}  will also guarantee exponential stability of its states  for vanishing inputs $u=0$.
\end{rem}

\subsection{Peak-to-Peak Norm}
In accordance with \cite[p.79]{scherer:2015}, we define the so-called peak-to-peak (p2p) norm as follows
\begin{equation}
\norm{y(t)}{\infty}\leq\gamma \norm{u(t)}{\infty}  \,.\label{eq:p2p_norm}
\end{equation}
Here, $\norm{u(t)}{\infty}$ and $\norm{y(t)}{\infty}$  denote the upper bounds of the absolute values of the input and the output, respectively. This norm is different from the more prevalent $H_{\infty}$ norm that considers integral terms instead.

\begin{prop} 
\label{prop:p2p}
The following holds for system \eqref{eq:SystemLMI} with vanishing initial values $\bm{x}(0)=\bm{0}$:  If the  LMIs 
\begin{equation}
\left[\begin{matrix}
-1 & \bm{b}^T\bm{P} \\
\bm{P}\bm{b} & \bm{A}_l^T  \bm{P}+ \bm{P}\bm{A}_l+\delta  \bm{P} 
\end{matrix}\right]  \prec 0 \,,\label{eq:LMI6}
\end{equation}
\begin{equation}
\left[\begin{matrix}
\overline{\gamma} & \bm{c}^T \\
\bm{c} & \bm{P} 
\end{matrix}\right]  \succ 0 \label{eq:LMI7}
\end{equation}
and
\begin{equation}
\bm{P}\succ 0 \, \label{eq:LMI7_5}
\end{equation}
are simultaneously satisfied for a given $\delta>0$ for all $L$ vertices of the polytopic system description \eqref{eq:Polytope} under minimization of $\overline{\gamma}^*=\mathrm{min}\{\overline{\gamma}\}$ the system \eqref{eq:SystemLMI} satisfies the  p2p norm \eqref{eq:p2p_norm} with the minimal gain $\gamma=\sqrt{\overline{\gamma}^*/\delta}$, and the linearizing substitution $\overline{\gamma}=\gamma^2 \delta$.
\end{prop}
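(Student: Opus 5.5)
The proof reuses Prop. \ref{prop:exstab} with $\bm{x}(0)=\bm{0}$ and turns the output-bound factor $\bm{c}^T\bm{P}^{-1}\bm{c}$ into an LMI via a Schur complement. LMI \eqref{eq:LMI6} together with \eqref{eq:LMI7_5} is exactly the LMI set of Prop. \ref{prop2}. So for any feasible $\bm{P}$, Prop. \ref{prop:exstab} applies. With vanishing initial values the exponential term drops out of \eqref{eq:p2p_full}, leaving
\begin{equation*}
y^2(t)\leq \bm{c}^T\bm{P}^{-1}\bm{c}\,\norm{u(t)}{\infty}^2/\delta \quad \text{for all } t>0 .
\end{equation*}
Taking the supremum over $t$ then yields the p2p bound \eqref{eq:p2p_norm} with
\begin{equation*}
\gamma^2 = \bm{c}^T\bm{P}^{-1}\bm{c}/\delta .
\end{equation*}

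Next I would treat \eqref{eq:LMI7}. Since $\bm{P}\succ 0$, the Schur complement with respect to the lower-right block $\bm{P}$ shows that \eqref{eq:LMI7} is equivalent to
\begin{equation*}
\overline{\gamma}-\bm{c}^T\bm{P}^{-1}\bm{c}>0 .
\end{equation*}
Hence every feasible pair $(\bm{P},\overline{\gamma})$ satisfies $\bm{c}^T\bm{P}^{-1}\bm{c}<\overline{\gamma}$. Inserting this into the bound above gives
\begin{equation*}
\norm{y}{\infty}^2 \leq \frac{\overline{\gamma}}{\delta}\,\norm{u}{\infty}^2 ,
\end{equation*}
that is, \eqref{eq:p2p_norm} holds with $\gamma=\sqrt{\overline{\gamma}/\delta}$. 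This is the substitution $\overline{\gamma}=\gamma^2\delta$. The substitution is what makes the problem linear in the decision variables: the nonlinear term $\bm{c}^T\bm{P}^{-1}\bm{c}$ never appears explicitly, and for fixed $\delta$ both $\bm{P}$ and $\overline{\gamma}$ enter \eqref{eq:LMI6}--\eqref{eq:LMI7_5} affinely.

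Finally, $\gamma=\sqrt{\overline{\gamma}/\delta}$ is strictly increasing in $\overline{\gamma}$ for fixed $\delta>0$. So minimizing $\overline{\gamma}$ over the feasible set gives the smallest gain certifiable by this quadratic-Lyapunov argument, namely $\gamma=\sqrt{\overline{\gamma}^*/\delta}$.

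The step I expect to need the most care is the meaning of ``minimal''. First, the LMIs are strict, so $\overline{\gamma}^*$ is in general an infimum rather than an attained minimum. I would argue that the bound $\norm{y}{\infty}\le\sqrt{\overline{\gamma}/\delta}\,\norm{u}{\infty}$ holds for every feasible $\overline{\gamma}>\overline{\gamma}^*$, and then pass to the limit: the inequality is non-strict, so it survives letting $\overline{\gamma}\to\overline{\gamma}^*$. Second, ``minimal'' must be read as minimal among the bounds this LMI certificate can produce for the given $\delta$, not as the true p2p norm of the system. I would state this explicitly, since the Young-inequality step in Prop. \ref{prop2} and the polytopic relaxation are both conservative.
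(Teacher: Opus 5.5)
Your proposal is correct and follows essentially the same route as the paper: apply Prop.~\ref{prop:exstab} with $\bm{x}(0)=\bm{0}$ to get $\gamma^2=\bm{c}^T\bm{P}^{-1}\bm{c}/\delta$, turn \eqref{eq:LMI7} into a scalar condition via the Schur complement, and use that $\overline{\gamma}\mapsto\sqrt{\overline{\gamma}/\delta}$ is increasing for fixed $\delta$. Your Schur complement has the correct direction, $\overline{\gamma}-\bm{c}^T\bm{P}^{-1}\bm{c}>0$, whereas the paper's text writes the reversed inequality; your remarks that $\overline{\gamma}^*$ is an infimum (handled by the limit argument) and that ``minimal'' means minimal among certificates obtainable from these LMIs are valid refinements that the paper leaves implicit.
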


 \begin{pf}
 The LMIs \eqref{eq:LMI6} and \eqref{eq:LMI7_5}, which are identical to the LMIs  \eqref{eq:LMI1} and   \eqref{eq:LMI2}, guarantee that \eqref{eq:p2p_full} holds. The inequality \eqref{eq:p2p_full} simplifies to $y^2(t) \leq \bm{c}^T \bm{P}^{-1} \bm{c} \cdot \norm{u(t)}{\infty}^2/\delta$ for vanishing initial values and can be reformulated to 
  \begin{equation}
 |y(t)| \leq \underbrace{\sqrt{\bm{c}^T \bm{P}^{-1} \bm{c}/\delta}}_{\gamma}  \cdot \norm{u(t)}{\infty}, \label{eq:p2p2}
 \end{equation}
 which guarantees the fulfillment of the p2p norm \eqref{eq:p2p_norm} with
 \begin{equation}
 \gamma=\sqrt{\bm{c}^T \bm{P}^{-1} \bm{c}/\delta} \,. \label{eq:gamma}
 \end{equation}
 One wants, however, to find the minimal value $\gamma>0$ that fulfills the LMIs. For this purpose, $\overline{\gamma}=\gamma^2 \delta$ is introduced, which allows us to reformulate \eqref{eq:gamma} to the equation $\overline{\gamma}=\bm{c}^T \bm{P}^{-1} \bm{c}$. For a given $\delta>0$, any minimization of $\overline{\gamma}>0$ will also minimize $\gamma>0$. Since $\overline{\gamma}>0$ holds, we can state $\bm{c}^T \bm{P}^{-1} \bm{c}-\overline{\gamma}>0$ which can be reformulated in the form of the respective LMI \eqref{eq:LMI7} using the Schur complement formula. \qed
 \end{pf}

\begin{rem}
Since the LMIs are not linear in $\delta$, it cannot be optimized directly.  In our case, finding the value $\delta$ that minimizes $\gamma$ is done via bisection. For the solution of the LMIs, we use YALMIP, see \cite{Lofberg:2004}, together with SeDuMi, see \cite{sturm:1999}.
\end{rem}

\begin{rem}
Due to the use of Youngs inequality and the nonlinear parameter $\delta$, the p2p norm may be quite conservative, which is also mentioned by \cite[p.99]{scherer:2015}. The $H_\infty$ serves as lower bound on the p2p norm and is generally easier to evaluate. 
\end{rem}

\begin{rem}
For the linear case, for strictly proper SISO systems, the LMIs \eqref{eq:LMI6} to \eqref{eq:LMI7_5} are equivalent to the LMI presented in \cite[p.80]{scherer:2015}, which can be shown analytically. A further equivalence is found by comparison with \cite[pp.83-84]{Boyd:1994} and \cite[p.88]{Boyd:1994}.  Our LMI and the LMI variant presented in \cite[p.80]{scherer:2015} were compared w.r.t.~computational speed and final results -- there was no noticeable difference.
\end{rem}

\section{Application of the Error Bounds}
\label{sec:usage_of_bounds}
Now, we want to utilize the previously derived output and state bounds for our particular controlled system \eqref{eq:QlinSysDyn}.

\begin{thm} 
\label{thm:appl1}
If the disturbance input gain $\overline{e}(e_y(t),y_r,t)$ can be bounded by $e^-\leq \overline{e}(e_y(t),y_r,t)\leq e^+$  for a given $\delta$, the simultaneous feasibility of the LMIs

\begin{equation}
\left[\begin{matrix}
-1 & \bm{b}^T\bm{P} \\
\bm{P}\bm{b} & \bm{A}_1^T  \bm{P}+ \bm{P}\bm{A}_1+\delta  \bm{P} 
\end{matrix}\right]  \prec 0  \,,\label{eq:LMI8}
\end{equation}
and
\begin{equation}
\left[\begin{matrix}
-1 & \bm{b}^T\bm{P} \\
\bm{P}\bm{b} & \bm{A}_2^T  \bm{P}+ \bm{P}\bm{A}_2+\delta  \bm{P} 
\end{matrix}\right]  \prec 0 \,, \label{eq:LMI9}
\end{equation}
as well as the LMIs
\begin{equation}
\left[\begin{matrix}
\overline{\gamma} & \bm{c}^T \\
\bm{c} & \bm{P} 
\end{matrix}\right]  \succ 0 \,, \label{eq:LMI10}
\end{equation}
and
\begin{equation}
\bm{P}\succ 0  \label{eq:LMI11}
\end{equation}
under minimization of $\overline{\gamma}^* =\mathrm{min}\{\overline{\gamma}\}  \,,$ with the polytopic vertices
\begin{equation}
\bm{A}_1=\left[ \begin{matrix}
-K_P& e^+ \\
-K_I e^+&0 	
\end{matrix}\right] \,, \bm{A}_2=\left[ \begin{matrix} \label{eq:poly2}
-K_P& e^- \\
-K_I e^-&0 
\end{matrix}\right] \,,
\end{equation}
and $\bm{b}=[0,1]^T$, $\bm{c}^T=[1,0]$ guarantees that the system \eqref{eq:SystemBase} controlled with \eqref{eq:ControlLaw} and \eqref{eq:ParamUpdateLaw} fulfills the following p2p norm for vanishing initial conditions
\begin{equation}
\norm{e_y(t)}{\infty}\leq\gamma \norm{\dot{z}_e(t)}{\infty} \,, \label{eq:p2p_norm2}
\end{equation}
with $\gamma=\sqrt{\overline{\gamma}^*/\delta}$. 
\end{thm}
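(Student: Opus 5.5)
The argument reduces Theorem~\ref{thm:appl1} to Prop.~\ref{prop:p2p} applied to the error dynamics of Prop.~\ref{prop1}.

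First, by Prop.~\ref{prop1}, the closed loop of \eqref{eq:SystemBase} with the controller and update law has the quasi-linear form \eqref{eq:QlinSysDyn}. Its state is $\bm{x}=[e_y,e_z]^T$ and its input is $v=\dot{z}_e$, entering through $\bm{b}=[0,1]^T$. With $\bm{c}^T=[1,0]$ the output is $y=\bm{c}^T\bm{x}=e_y$. So the closed loop is an instance of \eqref{eq:SystemLMI} with constant $\bm{b},\bm{c}$ and state matrix $\bm{A}(\bm{x},t)$. The stated control law is meant to include the $\hat{z}$ term of \eqref{eq:ExControlLaw}; otherwise the input $\dot z_e$ would not appear.

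The key step is to show that $\bm{A}(\bm{x},t)$ lies in the polytope spanned by $\bm{A}_1$ and $\bm{A}_2$ in \eqref{eq:poly2}. The matrix depends affinely on the single scalar $\overline{e}$:
\begin{equation*}
\bm{A}=\bm{A}_0+\overline{e}\,\bm{E},\quad \bm{A}_0=\left[\begin{matrix}-K_P&0\\0&0\end{matrix}\right],\quad \bm{E}=\left[\begin{matrix}0&1\\-K_I&0\end{matrix}\right].
\end{equation*}
Since $e^-\le\overline{e}\le e^+$, we can write $\overline{e}=\lambda_1 e^+ +\lambda_2 e^-$. If $e^+\neq e^-$, take $\lambda_1=(\overline{e}-e^-)/(e^+-e^-)\in[0,1]$ and $\lambda_2=1-\lambda_1$; if $e^+=e^-$, any admissible $\lambda$ works. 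Affinity then gives $\bm{A}(\bm{x},t)=\lambda_1(t)\bm{A}_1+\lambda_2(t)\bm{A}_2$, which is exactly the convex representation \eqref{eq:Polytope} with $L=2$. Because the weights depend on the trajectory, it matters that the argument in Prop.~\ref{prop2} only uses a pointwise-in-time matrix inequality. Indeed, the left-hand side of \eqref{eq:LMI5} is affine in $\bm{A}$ apart from the $\bm{A}$-independent term $\bm{P}\bm{b}\bm{b}^T\bm{P}+\delta\bm{P}$. Equivalently, the block matrix in \eqref{eq:LMI1} is affine in $\bm{A}$. Hence the convex combination of \eqref{eq:LMI8} and \eqref{eq:LMI9} with weights $\lambda_1,\lambda_2$ yields the same inequality for $\bm{A}(\bm{x},t)$ at every $t$.

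With this in place, \eqref{eq:LMI8}--\eqref{eq:LMI11} are precisely the LMIs \eqref{eq:LMI6}--\eqref{eq:LMI7_5} of Prop.~\ref{prop:p2p} for both vertices. Invoking Prop.~\ref{prop:p2p} with $\bm{x}(0)=\bm{0}$ gives $\norm{e_y}{\infty}\le\gamma\norm{\dot z_e}{\infty}$ with $\gamma=\sqrt{\overline{\gamma}^*/\delta}$, which is \eqref{eq:p2p_norm2}.

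The main obstacle is justifying that the bound on $\overline{e}$ holds along the whole trajectory. $\overline{e}$ depends on $e_y$, so the polytopic embedding is valid only while the state stays in the region where $e^-\le\overline{e}\le e^+$. Under the theorem's hypothesis this is assumed for all arguments. If it were only local, I would close the gap with an invariance argument: the sublevel set $V\le\norm{\dot z_e}{\infty}^2/\delta$ obtained from \eqref{eq:ie1} is forward invariant. By \eqref{eq:EllipsMax}, on that set $|e_y|\le\gamma\norm{\dot z_e}{\infty}$, so it suffices that the gain bounds hold for such $e_y$.
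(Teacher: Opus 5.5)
Your proposal is correct and follows the paper's own proof. The paper simply notes that the error system of Prop.~\ref{prop1} is an instance of \eqref{eq:SystemLMI} and invokes Prop.~\ref{prop:p2p}. You also spell out what the paper leaves implicit: the affine dependence on $\overline{e}$ that places $\bm{A}(\bm{x},t)$ in the convex hull of $\bm{A}_1,\bm{A}_2$ pointwise in time, and the caveat that the state-dependent gain bound must hold along the trajectory, for which your sublevel-set invariance argument is a sound local fix.
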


 \begin{pf}
 The reformulated error-system \eqref{eq:QlinSysDyn} fulfills the conditions for system \eqref{eq:SystemLMI}. The equations  \eqref{eq:LMI8} to \eqref{eq:p2p_norm2} are just a straightforward implementation of Prop. \ref{prop:p2p}. \qed
 \end{pf}

\begin{cor}
\label{cor:appl2}
If the LMIs of Thm. \ref{thm:appl1} are fulfilled, the system \eqref{eq:SystemBase} controlled with \eqref{eq:ControlLaw} has a bounded output error for non-vanishing initial conditions
\begin{equation}
e_y(t) \leq \sqrt{ \bm{c}^T \bm{P}^{-1} \bm{c} \cdot \left(exp(-\delta t)  \bm{x}(0)^T\bm{P}\bm{x}(0) +\norm{\dot{z}_e(t)}{\infty}^2/\delta \right)}\,. \label{eq:p2p_full_apl} 
\end{equation}
\end{cor}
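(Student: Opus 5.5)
The corollary follows by applying Prop.~\ref{prop:exstab} to the error system of Prop.~\ref{prop1}, in the same way Thm.~\ref{thm:appl1} applies Prop.~\ref{prop:p2p}. The difference is that the initial state $\bm{x}(0)=[e_y(0),e_z(0)]^T$ is no longer assumed to vanish. Strictly, the closed loop here should be the one with the extended law \eqref{eq:ExControlLaw}, whose input is $\dot{z}_e$; the law \eqref{eq:ControlLaw} is the special case $\hat{z}\equiv 0$.

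First, I identify the setting. By Prop.~\ref{prop1}, the closed loop is exactly of the quasi-linear form \eqref{eq:SystemLMI}, with $\bm{A}(\bm{x},t)$ from \eqref{eq:QlinSysDyn}, $\bm{b}=[0,1]^T$, $\bm{c}^T=[1,0]$ and input $u=v=\dot{z}_e$. Since $\bm{A}(\bm{x},t)$ depends affinely on the scalar $\overline{e}\in[e^-,e^+]$, I can write $\overline{e}=\lambda_1 e^+ + \lambda_2 e^-$ with $\lambda_{1,2}(t)\geq 0$ and $\lambda_1+\lambda_2=1$. This gives $\bm{A}(\bm{x},t)=\lambda_1\bm{A}_1+\lambda_2\bm{A}_2$, a convex combination as in \eqref{eq:Polytope} with the vertices \eqref{eq:poly2}. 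The LMIs \eqref{eq:LMI8}, \eqref{eq:LMI9} and \eqref{eq:LMI11} are then precisely \eqref{eq:LMI1} and \eqref{eq:LMI2} for all vertices. Hence Prop.~\ref{prop2} yields \eqref{eq:ie1}, i.e. $V(\bm{x}(t))< \exp(-\delta t)V(\bm{x}(0))+\norm{\dot{z}_e(t)}{\infty}^2/\delta$. The LMI \eqref{eq:LMI10} is not needed, except that it certifies $\bm{c}^T\bm{P}^{-1}\bm{c}<\overline{\gamma}$.

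Second, I invoke Prop.~\ref{prop:exstab}. At each time $t$ the state lies in the ellipsoid $\{\bm{x}:\bm{x}^T\bm{P}\bm{x}\leq V(\bm{x}(t))\}$. By \eqref{eq:EllipsMax}, $e_y(t)=\bm{c}^T\bm{x}(t)$ is therefore bounded by $\sqrt{\bm{c}^T\bm{P}^{-1}\bm{c}\,V(\bm{x}(t))}$. Inserting the bound on $V$ gives \eqref{eq:p2p_full}, with $y$ replaced by $e_y$ and $u$ replaced by $\dot{z}_e$. Taking the square root of both sides, which is legitimate since the right-hand side is nonnegative, gives \eqref{eq:p2p_full_apl}; the same argument in fact bounds $|e_y(t)|$. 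This bound is finite because $\norm{\dot{z}_e}{\infty}$ is finite: $\dot{z}$ is bounded by the corollary following Def.~\ref{def:sys}, and the DERL filter gives Cor.~\ref{cor:DERL}.

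The main obstacle is the polytopic step. Proposition~\ref{prop1} writes the gain as $\overline{e}(e_y,y_r,t)$, so the bounds $e^-\leq\overline{e}\leq e^+$ must hold along the whole trajectory, including during the transient caused by nonzero $\bm{x}(0)$. I would take this as the standing hypothesis of Thm.~\ref{thm:appl1}: the bound holds for all reachable $(e_y,y_r,t)$, for instance because $e(y,t)$ is globally bounded. If it held only on a region, I would instead argue invariance. The sublevel set of $V$ with level $V(\bm{x}(0))+\norm{\dot{z}_e}{\infty}^2/\delta$ contains the trajectory, and if this set maps into the region where the bounds hold, the argument closes.

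A second point to check is solving the differential inequality \eqref{eq:ie4}. Proposition~\ref{prop2} asserts this step; one multiplies by $\exp(\delta t)$, integrates, and uses $\int_0^t \exp(-\delta(t-s))\,ds\leq 1/\delta$.
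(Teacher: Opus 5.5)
Your proposal is correct and follows the paper's route: the paper also just notes that \eqref{eq:LMI8}, \eqref{eq:LMI9} and \eqref{eq:LMI11} are the vertex instances of \eqref{eq:LMI1} and \eqref{eq:LMI2}, so \eqref{eq:p2p_full} applies directly to the error system \eqref{eq:QlinSysDyn} with input $\dot{z}_e$. Your extra details fill in steps the paper leaves implicit without changing the argument: the convex decomposition in $\overline{e}$, the caveat that $e^-\leq\overline{e}\leq e^+$ must hold along the transient, and the integration of \eqref{eq:ie4}.
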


 \begin{pf}
 The bound  \eqref{eq:p2p_full_apl} is an application of the bound \eqref{eq:p2p_full}. The related LMIs \eqref{eq:LMI1} and \eqref{eq:LMI2} are implemented as the LMIs \eqref{eq:LMI8}, \eqref{eq:LMI9}  and \eqref{eq:LMI11}. They are thus automatically fulfilled with the LMIs of Thm. \ref{thm:appl1}. \qed
 \end{pf}

\begin{cor}
\label{cor:appl3}
If the LMIs of Thm. \ref{thm:appl1} are fulfilled, the system \eqref{eq:SystemBase} controlled with \eqref{eq:ControlLaw} and \eqref{eq:ParamUpdateLaw} has a bounded parameter error for non-vanishing initial conditions
\begin{equation}
e_z(t) \leq \sqrt{ \bm{c}_z^T \bm{P}^{-1} \bm{c}_z \cdot \left(exp(-\delta t)  \bm{x}(0)^T\bm{P}\bm{x}(0) +\norm{\dot{z}_e(t)}{\infty}^2/\delta \right)}\,, \label{eq:p2p_full_aplz} 
\end{equation}
with $\bm{c}_z^T=[0,1]$. 
\end{cor}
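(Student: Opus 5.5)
The plan is to copy the argument of Cor.~\ref{cor:appl2}, replacing the output vector $\bm{c}$ by $\bm{c}_z$. The key observation is that the Lyapunov bound \eqref{eq:ie1} of Prop.~\ref{prop2} does not depend on any output vector. It uses only the LMIs \eqref{eq:LMI1} and \eqref{eq:LMI2}. Prop.~\ref{prop:exstab} then turns this bound on $V$ into a bound on an arbitrary linear functional of the state, as Remark~\ref{rem:remark1} already notes.

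First I will check the hypotheses of Prop.~\ref{prop2}. By Prop.~\ref{prop1}, the closed loop formed by \eqref{eq:SystemBase}, \eqref{eq:ExControlLaw} and \eqref{eq:ParamUpdateLaw} has the quasi-linear form \eqref{eq:QlinSysDyn}. Here $\bm{x}=[e_y,e_z]^T$, the input is $v=\dot{z}_e$, and $\bm{b}=[0,1]^T$ is constant. The matrix $\bm{A}(\bm{x},t)$ depends affinely on the single scalar $\overline{e}$. Under $e^-\leq\overline{e}\leq e^+$, we can write $\overline{e}=\lambda_1 e^+ + \lambda_2 e^-$ with $\lambda_1,\lambda_2\geq 0$ and $\lambda_1+\lambda_2=1$. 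This gives exactly the convex combination \eqref{eq:Polytope} of the vertices $\bm{A}_1,\bm{A}_2$ in \eqref{eq:poly2}.

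Next, the LMIs \eqref{eq:LMI8}, \eqref{eq:LMI9} and \eqref{eq:LMI11} are precisely \eqref{eq:LMI1} and \eqref{eq:LMI2} written out for both vertices. So Prop.~\ref{prop2} yields $V(\bm{x}(t))<\exp(-\delta t)V(\bm{x}(0))+\norm{\dot{z}_e(t)}{\infty}^2/\delta$ with the same $\bm{P}$. The LMI \eqref{eq:LMI10} is not needed for this step.

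Then I apply \eqref{eq:EllipsMax} with $\bm{c}_z^T=[0,1]$. Since $e_z=\bm{c}_z^T\bm{x}$, and $\bm{x}(t)$ lies in the ellipsoid $\bm{x}^T\bm{P}\bm{x}\leq V(\bm{x}(t))$, we get
\begin{equation*}
e_z(t)^2 \leq \bm{c}_z^T\bm{P}^{-1}\bm{c}_z\, V(\bm{x}(t)).
\end{equation*}
Substituting the bound on $V$ and taking square roots gives \eqref{eq:p2p_full_aplz}. In fact this bounds $|e_z(t)|$, which is stronger than the stated inequality.

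I do not expect a real obstacle. The only step that needs care is justifying that Prop.~\ref{prop:exstab} holds for $\bm{c}_z$ and not only for the $\bm{c}$ used in the $\gamma$ optimization. This works because the output vector enters the chain of arguments only through \eqref{eq:EllipsMax}, which holds for any vector, so $\bm{P}$ can be reused. A minor bookkeeping point is that the statement cites only \eqref{eq:ParamUpdateLaw} and the control law \eqref{eq:ControlLaw}. The error dynamics of Prop.~\ref{prop1} correspond to the extended law \eqref{eq:ExControlLaw}. The original law is the special case $\hat{z}\equiv 0$, for which $z_e=-z$ and $\dot{z}_e=-\dot z$, so the same argument covers both.
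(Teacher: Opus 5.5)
Your proposal is correct and follows the paper's own argument. The paper proves this corollary by pointing to Cor.~\ref{cor:appl2}, where \eqref{eq:LMI8}, \eqref{eq:LMI9}, \eqref{eq:LMI11} are \eqref{eq:LMI1}--\eqref{eq:LMI2} for both vertices so that \eqref{eq:p2p_full} applies, and to Remark~\ref{rem:remark1}, where Prop.~\ref{prop:exstab} holds for any output vector, here $\bm{c}_z$. Your extra details (the convex-combination check on $\overline{e}$, noting \eqref{eq:LMI10} is unused, and the \eqref{eq:ControlLaw} versus \eqref{eq:ExControlLaw} bookkeeping) are accurate and simply make that terse argument explicit.
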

 \begin{pf}
 See Cor.  \ref{cor:appl2} and Remark \ref{rem:remark1}. \qed
 \end{pf}

\begin{cor}
\label{cor:appl4}
If  the LMIs of Thm. \ref{thm:appl1} are fulfilled and  GP prediction errors are constant, e.g. $v=\dot{z}_e=0$, system \eqref{eq:SystemBase} controlled with \eqref{eq:ControlLaw} and \eqref{eq:ParamUpdateLaw}  is exponentially stable. Therefore, $e_y(t)\rightarrow0$ and  $e_z(t)\rightarrow0$  hold for $t\rightarrow\infty$.
\end{cor}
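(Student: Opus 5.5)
The plan is to specialize the bounds already obtained for the error system to the case $v=\dot{z}_e=0$. There is one subtlety. With constant GP prediction errors, $\dot{z}_e\equiv 0$ means $\norm{\dot{z}_e(t)}{\infty}=0$. So in Cor.~\ref{cor:appl2} and Cor.~\ref{cor:appl3} the input terms $\norm{\dot{z}_e(t)}{\infty}^2/\delta$ vanish, and only the exponentially decaying part of the bounds remains.

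The steps are as follows.

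First, I invoke Prop.~\ref{prop1}. The closed loop of \eqref{eq:SystemBase} with \eqref{eq:ExControlLaw} and \eqref{eq:ParamUpdateLaw} is described by the quasi-linear error dynamics \eqref{eq:QlinSysDyn} with input $v=\dot{z}_e$. For $v=0$ this is the autonomous system $\dot{\bm{x}}=\bm{A}(\bm{x},t)\bm{x}$. The control law \eqref{eq:ControlLaw} is recovered from \eqref{eq:ExControlLaw} with $\hat{z}\equiv 0$. In that case $z_e=-z$, so $\dot{z}_e=0$ is exactly the constant-disturbance assumption used in the adaptive design.

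Second, I use the hypothesis $e^-\leq\overline{e}\leq e^+$. Under it, $\bm{A}(\bm{x},t)$ is a convex combination of $\bm{A}_1$ and $\bm{A}_2$ from \eqref{eq:poly2}, because $\bm{A}$ depends affinely on $\overline{e}$. Feasibility of \eqref{eq:LMI8}, \eqref{eq:LMI9} and \eqref{eq:LMI11} therefore gives the hypotheses of Prop.~\ref{prop2}.

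Third, I apply \eqref{eq:ie1} with $u=0$. This yields $V(\bm{x}(t))\leq \exp(-\delta t)V(\bm{x}(0))$. Combined with $\bm{P}\succ 0$, it gives
\begin{equation*}
\lambda_{\min}(\bm{P})\,\|\bm{x}(t)\|^2\leq \exp(-\delta t)\,\lambda_{\max}(\bm{P})\,\|\bm{x}(0)\|^2 ,
\end{equation*}
i.e., $\|\bm{x}(t)\|\leq\sqrt{\lambda_{\max}(\bm{P})/\lambda_{\min}(\bm{P})}\,\exp(-\delta t/2)\,\|\bm{x}(0)\|$. This is global exponential stability of the origin of the error system.

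Equivalently, I can read this off Cor.~\ref{cor:appl2} and Cor.~\ref{cor:appl3} (with $\bm{c}$ and $\bm{c}_z$) and then use Remark~\ref{rem:remark1}. Either way, $|e_y(t)|$ and $|e_z(t)|$ are bounded by constants times $\exp(-\delta t/2)$. Hence $e_y\to0$ and $e_z\to0$ as $t\to\infty$.

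The main obstacle is a technicality in Prop.~\ref{prop2}. The strict inequality \eqref{eq:ie4} is derived only for $\bm{x}\neq\bm{0}$, and the step from the differential inequality to \eqref{eq:ie1} relies on a comparison argument. With $u=0$ I would handle this directly. From \eqref{eq:LMI4}, for $\bm{x}\neq\bm{0}$ we have $\dot{V}\leq \bm{x}^T(\bm{A}^T\bm{P}+\bm{P}\bm{A})\bm{x}<-\delta V$, since the term $\bm{P}\bm{b}\bm{b}^T\bm{P}$ is positive semidefinite. At $\bm{x}=\bm{0}$ the solution stays at the origin. Gr\"onwall's lemma applied to $\dot{V}\leq-\delta V$ then gives the exponential decay rigorously.

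I would also note that the Lipschitz continuity of $e(y,t)$ assumed in Def.~\ref{def:sys} guarantees existence and uniqueness of solutions along which this argument is carried out.
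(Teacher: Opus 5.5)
Your proposal is correct and follows the paper's route. The paper's proof simply sets $\norm{\dot{z}_e}{\infty}=0$ in \eqref{eq:p2p_full_apl} and \eqref{eq:p2p_full_aplz}, which is your third step applied through \eqref{eq:ie1}. Your additional care fills in details the paper leaves implicit: you use the non-strict bound $V(\bm{x}(t))\leq \exp(-\delta t)V(\bm{x}(0))$ (the strict form of \eqref{eq:ie1} fails at $t=0$ when $u=0$), you add the eigenvalue sandwich to get a full-state estimate, and you reconcile \eqref{eq:ControlLaw} with \eqref{eq:ExControlLaw} via $\hat{z}\equiv 0$.
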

 \begin{pf}
 The vanishing errors follow as a special case of \eqref{eq:p2p_full_apl} and \eqref{eq:p2p_full_aplz} for $\norm{\dot{z}_e}{\infty}=0$. \qed
 \end{pf}

\begin{rem}
Cor. \ref{cor:appl4} guarantees perfect tracking for $t\rightarrow\infty$ if $\dot{z}_e=0$ holds. The vanishing prediction error rate $\dot{z}_e=0$ with $z_e=\hat{z}-z$ is achieved, e.g., if the GP prediction is perfect or possesses a constant offset or  if the learning is converged and  $\dot{\bm{\zeta}}=\bm{0}$.
\end{rem}

\subsection{Proposed Utilization of the Bounds}
\label{sec:ProposedUtilization}
A possible workflow can be as follows: One could define a bound on $y_r \in Y_r =[y_r^-,y_r^+]$   for the system. This bound can be used to conservatively approximate the bounds of the disturbance input gain by means of some desired maximum output error $e_{y,lim}$, e.g.  $e^-=\mathrm{min}\{\overline{e}|y_r \in Y_r,|e_y(t)|\leq e_{y,lim}\}$ and $e^+=\mathrm{max}\{\overline{e}|y_r \in Y_r,|e_y(t)|\leq e_{y,lim}\}$. Then, the LMIs from Thm. \ref{thm:appl1} are evaluated to derive the p2p gain  $\gamma$ of the error system. These can then be used to define a bound 
\begin{equation}
z_{lim}=\mathrm{max}(e_{y,lim}/\gamma -\norm{\dot{z}}{\infty},0) \label{eq:DERLbound}
\end{equation}
for the DERL according to Cor. \ref{cor:DERL} if $\norm{\dot{z}}{\infty}$ can be approximated.

\begin{rem}
Besides the necessary mixture of continuous-time and discrete-time formulations as indicated in Def.~\ref{def:DERL}, the control algorithm contains several assumptions. While the GPSOL algorithm works very well with noisy measurements -- for the proposed inversion-based control algorithm and the derived error system -- we assume a perfect knowledge of the system output $y(t)$. For the control itself, this is no large hindrance as this combination of stochastic filters and deterministic control is a very popular choice and does -- in particular cases -- even lead to the optimal solution. While so far there have not been any problems in practice, one should, however, be careful with the derived output bounds in the case of very noisy measurements, because they do not take into account the uncertainty w.r.t.~the inversion-based control.
\end{rem}

\section{Simulation Results for the Error Bounds}
\label{sec:numval}
Firstly, we want to demonstrate the calculation of the output bound for the following nonlinear system
\begin{equation}
\dot{y}(t)=\underbrace{-1+y^3(t)}_{a(y(t))}+\underbrace{y^3(t)}_{b(y(t))}  u(t)+\underbrace{y(t)}_{e(y(t))}  z(t)   \,.
\end{equation}
The system is controlled by means of the control law \eqref{eq:ExControlLaw} and the parameter update law \eqref{eq:ParamUpdateLaw}.  We chose the control gains as follows: $K_I=20$ and $K_P=10$. The bounds of the  disturbance input gain are defined as  $e \in [1,10]$. This provides us with the following vertices of the polytopic description of the controlled system 
\begin{equation}
A_1=\left[ \begin{matrix}
-10& 10 \\
-200&0 
\end{matrix}\right],~ A_2=\left[ \begin{matrix}
-10& 1 \\
-20 &0 
\end{matrix}\right] \,. \label{eq:simvalpoly}
\end{equation}
For the simulation, we define the reference value as a uniform distributed signal: $y_r\sim \mathcal{U}(1.11,9.89)$, with a sampling time $T_y=0.9~\mathrm{s}$, which is low-pass filtered with a time constant $T_1=1e-3~\mathrm{s}$ to allow for an exact differentiation.

As at this point only the bounds shall be tested, the output of the hidden function $z(t)$ and it's estimate $\hat{z}(t)$ are defined by means of two independent Gaussian variables $\mathcal{N}(0,100^2)$, with a sampling time $T_z=3~\mathrm{s}$. The signals are rate limited: $\norm{\dot{z}}{\infty}< 0.2$ and $\norm{\dot{\hat{z}}}{\infty}< 0.2$.

The computation of the gain $\gamma$ by using \eqref{eq:p2p_norm2} for \eqref{eq:simvalpoly} results in  $\gamma=0.2653$. According to the definition of this gain in \eqref{eq:p2p_norm2}, the resulting output error $e_y=y_r-y$ should be bounded by 
\begin{equation}
\norm{e_{y}}{\infty}=(\norm{\dot{z}}{\infty}+\norm{\dot{\hat{z}}}{\infty}) \gamma =0.4 \cdot0.2653= 0.1061 \,.
\end{equation}

The simulation of $500s$ results   in $\norm{e_{y}}{\infty}=0.03328$, which is within the computed bound. For a comparison, the infinity gain results in $\gamma_{\infty}=0.0756$. This would  lead to  the output error bound 
\begin{equation}
\norm{e_{y}}{\infty}=(\norm{\dot{z}}{\infty}+\norm{\dot{\hat{z}}}{\infty}) \gamma_\infty =0.4 \cdot0.0756= 0.03024, 
\end{equation}
which is clearly violated in this particular case.

\section{Experimental Results}
\label{sec:exval}
To validate the complete control algorithm, we utilize the pneumatic test rig depicted in Fig.~\ref{pic:PneumPruefstand}. The system consists of two pneumatic valves, a pneumatic tank of volume $V=0.4~\mathrm{l} $ and a pressure sensor. The first pneumatic valve represents a 5/3 way valve, which is able to fill or deplete the tank based on the supply pressure $p_{in}=4~\mathrm{bar}$  or the ambient pressure $p_U$, respectively. The second valve is only employed as a throttle with variable diameter. The first valve possesses a known characteristic, and its voltage is the control input $u_C$, whereas the second valve characteristic is unknown. The corresponding control voltage $\zeta_1=u_z$ represents a measurable disturbance. The control system is implemented on a Bachmann real time hardware (CPU:MH230) with a sampling time of $T_s=1~\mathrm{ms}$.

\begin{figure}
	 \begin{center}
		 \includegraphics[width=0.8\linewidth]{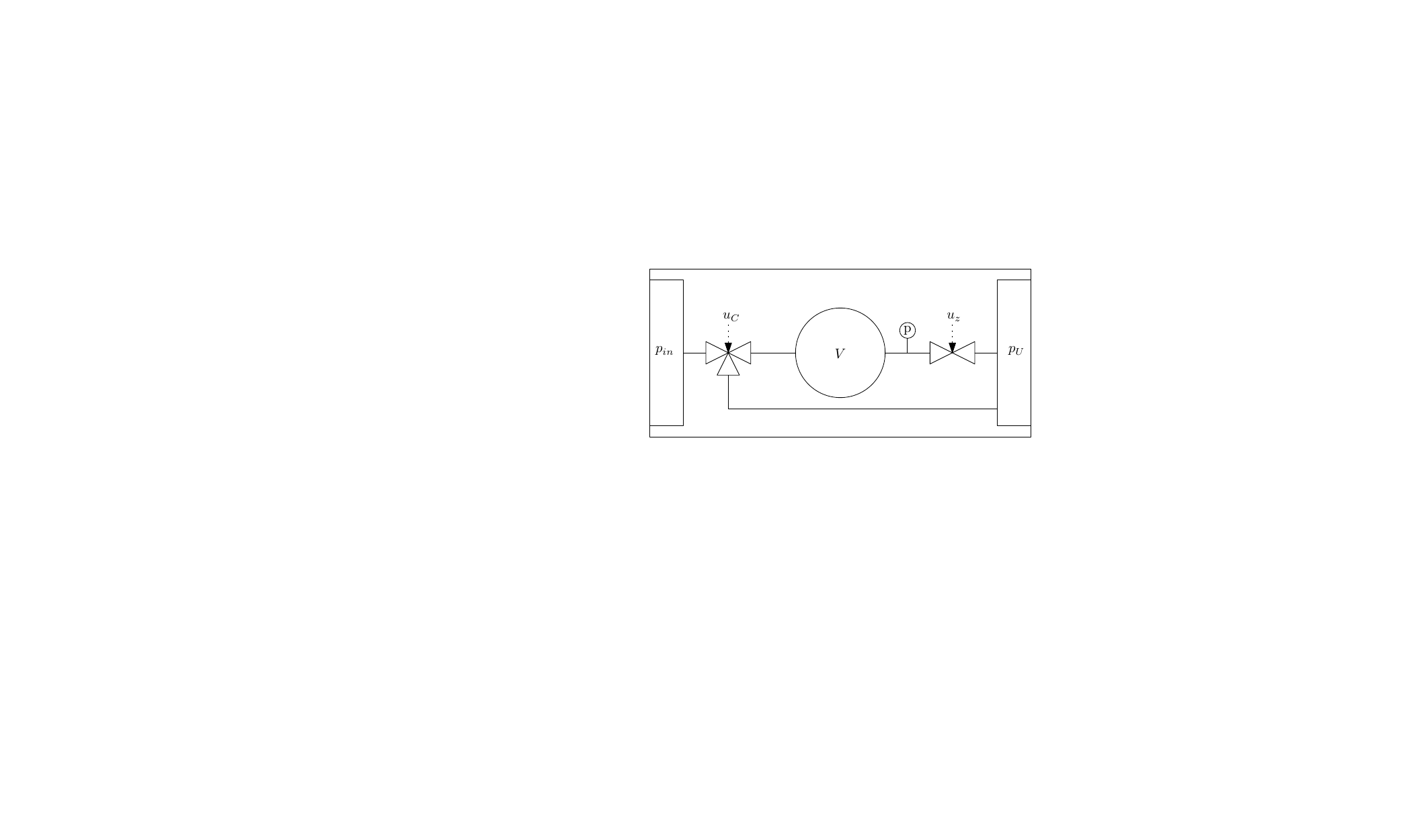}
		  \caption{Scheme of the considered pneumatic test rig at the Chair of Mechatronics, University of Rostock.} 
		  \label{pic:PneumPruefstand}
	 \end{center}
\end{figure}

\subsection{Model-Based Controller}
The pneumatic system can be described by a mass balance
\begin{equation}
\dot{m}(t)=\dot{m}_{in}(t)-\dot{m}_{out}(t) \label{eq:Mbalance} \,.
\end{equation}
Here, the air mass inside the tank is related to the tank pressure by means of the ideal gas law $p(t) V/(R T) =m(t) $. Assuming an isothermal thermodynamic process, the corresponding time derivative becomes $\dot{p}(t) V/(R T) =\dot{m}(t)$, and can be substituted in \eqref{eq:Mbalance} to derive $\dot{p}=R T/V \left(\dot{m}_{in}-\dot{m}_{out} \right) $.  The input mass flow depends in a known, nonlinear manner on the pressure drop over the first valve and its actuation voltage $\dot{m}_{in}(t)=g(p_{in},p_{U},p(t),u_C(t))$. The output mass flow $\dot{m}_{out}(t)=\Psi(p(t)-p_U)z_f(u_z(t))$ through the second valve is modeled by means of a pressure dependent term $\Psi(p-p_U)$ and another term depending on the actuation voltage $z_f(u_z)$, which represents the hidden function.
The overall nonlinear system dynamics can be represented in the following ODE
\begin{equation}
\dot{p}=\frac{R T}{V} g(p_{in},p_{U},p,u_C)-\left(\frac{R T}{V}\Psi(p-p_U)\right) z_f(u_z) \,.
\end{equation}
The relation of the outlet mass flow regarding the pressure drop is modeled by means of the following regularized turbulent flow equation
\begin{equation}
\Psi(p-p_U)=\mathrm{sgn}(p-p_U) \left(-\frac{\epsilon }{2}+\sqrt{\frac{\epsilon^2}{4}+|p-p_U|}\right) \,.
\end{equation}

An well-established method for the control of pneumatic systems, as also proposed in \cite{Wache:2021}, is an inversion of the input nonlinearities by evaluating $u_C=g_{I}(p_{in},p_{U},p,\dot{m}_{in,d})$ and defining $v=\dot{m}_{in,d}$ as a new input. This is applied here as well, which results in the following system 
\begin{equation}
\frac{dy}{dt}=\underbrace{\frac{R T}{V}}_b v+\underbrace{-\frac{R T}{V} \Psi(y-p_U)}_{e(y)} z_f(u_z)  \,.
\end{equation}
with state $y=p$. The controller is consists of the control law \eqref{eq:ExControlLaw} as well as the adaptation law \eqref{eq:ParamUpdateLaw}. The GPSOL algorithm, see Sec. \ref{sec:GPSOL}, is used to learn a model $\tilde{z}_f(u_z(t),(t))$ for the hidden function $z_f(u_z(t))$. The output of the GP prediction is filtered using the DERL filter, see Def.~\ref{def:DERL}. This results in a filtered prediction $\hat{z}(t)$ according to Def.~\ref{def:DERL}. The rate limit $z_{lim}$ is computed with  \eqref{eq:DERLbound} with  $\dot{z}=0$ and different $e_{y,lim}$ as specified later.  The range of the pressure drop over the outlet valve is approximated by $p-p_U\in[1,3]~\mathrm{bar}$. Together with the control gains, this leads to the polytopic system 
\begin{equation}
A_1=\left[ \begin{matrix}
-4& -0.0665 \\
66.5359&0 
\end{matrix}\right] \,,A_2=\left[ \begin{matrix}
-10& -0.1152 \\
115.2444 &0 
\end{matrix}\right] \,. \label{eq:exvalpoly}
\end{equation}
After an evaluation of the LMIs according to Theorem \ref{thm:appl1}, we obtain a p2p gain $\gamma=0.024$. The remaining parameters and hyperparameters are summarized in Table. \ref{table:Hyperparamters_ex}.

  \begin{table}
\centering
\captionsetup{width=0.9\hsize}
  \caption{Experiment Parameters and Hyperparameters}
\resizebox{\columnwidth}{!}{%
\begin{tabular}{ c c c c | c c c c } 
\hline
\multicolumn{4}{c|}{System and Control Parameters} & \multicolumn{4}{c}{GPSOL Hyperparameters}\\
\hline
 $T_s$  & $1e-3~\mathrm{s}$ & $R$  & $0.2871~\mathrm{\frac{J}{g K}}$ &$n_z$  & $1$ & $N_1$  & $6$ \\
 $T_0$  & $293.15~\mathrm{K}$ & $V$  & $4e-4~\mathrm{m^3}$  & $\underline{\zeta}_1$  & $0~\mathrm{V}$& $\overline{\zeta}_1$ & $5~\mathrm{V}$\\
 $K_P$  & $4$  &  $K_I$  & $1000$ &$L$ & $2.5$ & $\sigma_K$  & $1$ \\
 $K_\sigma$  & $5$  &  $ $ &  &  $Q_x$ & $(1e-5)^2$ & $R$ & $(1e-5)^2$\\
 \hline
  \end{tabular}%
 }
\label{table:Hyperparamters_ex}
\end{table}

\subsection{Results for the Error Bounds}
Firstly, we want to present the functionality of the DERL filter and the validity of the presented bounds. Therefore, we set the outlet valve to different stationary outlet valve openings with $\dot{z}=0$. Thereby, inequality \eqref{eq:p2p_norm2} is reduced to $\norm{e_y}{\infty}\leq \gamma z_{lim}$, which allows for a direct validation of the bounds (see Cor. \ref{cor:DERL}). We conduct the test for different pressure reference values of  $y_r=2.7~\mathrm{bar}$ and $y_r=1.3~\mathrm{bar}$. Moreover, the valve openings correspond to $u_z=1~\mathrm{V}$ and $u_z=4~\mathrm{V}$. The desired bound is defined as $e_{y,lim}=0.05~\mathrm{bar}$. 

The maximum error for all tests with the enabled DERL is $\norm{e_{y}}{\infty}=0.029~\mathrm{bar}$, which does not breach the bound.  For a comparison, without the DERL a maximum error of $\norm{e_{y}}{\infty}=0.45~\mathrm{bar}$ is measured. Therefore, also the experimental validation of the bounds can be considered as successful.

\subsection{Results for the Complete Control Structure}
To reduce the impact of stochastic effects, the validation of the complete control structure is conducted as follows: A filtered trajectory based on random steps for the desired pressure is defined for 5 runs with 100s each. Within this time span, equally filtered trajectories stemming from random steps with a higher sample rate are defined for $u_z$. All 5 runs are evaluated for all variants and algorithms with a settling time of 20s between the runs. Moreover, the GPs are re-initialized after all single runs. 

In Fig.~\ref{pic:CumError_stat2}, we compare the cumulated absolute output error (CAE) for all 5 runs for each variant, normalized to the results without GPSOL. It becomes obvious that the result with GPSOL massively outperforms the baseline -- the CAE is reduced by more than $74\%$. Furthermore, it turns out that the CAE for the control variants with an application of the DERL filter  shows lower errors at the beginning, but in general larger overall errors, which leads to a lower rise of the CAE in the beginning, but to higher end-values. This behavior is a direct result of the DERL filter. In order to achieve the output error bounds, the application of the GPSOL model prediction has to be filtered, which naturally reduces the performance. This trade-off between performance and robustness is actually seen as a function of the defined bounds $e_{y,lim}$. The closer the bounds are defined, the slower the learned model gets applied, which generally reduces the overall performance.

\begin{figure}
	 \begin{center}
		 \includegraphics[width=1\linewidth]{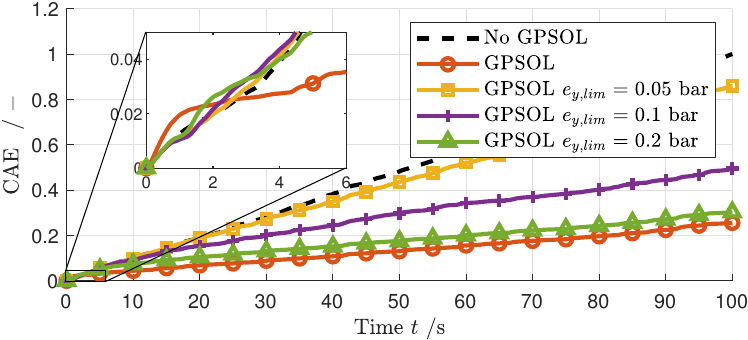}
		  \caption{Cumulated absolute error (CAE) for different bounds $e_{y,lim}$, averaged over 5 test runs.} 
		  \label{pic:CumError_stat2}
	 \end{center}
\end{figure}

\section{Conclusion and Outlook}
This paper presents a learning-based control for a particular class of systems with a set of stability conditions and output error bounds in dependence on the GP model prediction error rate. The stability conditions and the bounds can be efficiently calculated by using linear matrix inequalities (LMIs). The algorithm builds upon a previously presented GPSOL algorithm as well as the DERL filter. The GPSOL algorithm enables the online learning of an approximate model for an unknown subsystem, whereas the DERL filter provides a chance constraint on the prediction error rate of the learned GP. If the assumptions are satisfied, the complete control structure leads to an exponentially stable learning-based controller with guaranteed output error bounds. The derived bounds are validated in simulations of a nonlinear system. Furthermore, we present an experimental validation of the bounds as well as the complete control structure for a pneumatic test rig.

A natural extension of the controller and the derived bounds would involve the consideration of higher-order and multiple-input multiple-output (MIMO) systems. Under certain assumptions, these extensions should be possible. There could, however, emerge further problems regarding the inherent conservativeness of the bounds, which might increase. In \cite{GPSOLx:2026}, also submitted to the IFAC World Congress 2026, we present an extension of the GPSOL algorithm to hidden functions with uncertain inputs, which enables the consideration of system states as GP inputs. The combination of these two methods would be straightforward and would greatly increase the applicability of the proposed control.

\bibliography{ifacconf,myrefs}

\end{document}